\documentclass{article}

\usepackage{caption}
\usepackage{subcaption}
\usepackage[final]{nips_2016}
\usepackage{amsmath,amssymb,rotating,multirow, amsthm}
\usepackage[utf8]{inputenc} 
\usepackage[T1]{fontenc}    
\usepackage{hyperref}       
\usepackage{url}            
\usepackage{booktabs}       
\usepackage{amsfonts}       
\usepackage{nicefrac}       
\usepackage{microtype,arydshln}      
\usepackage{graphicx} 
\usepackage{multirow}
\usepackage{mathrsfs}
\usepackage{amssymb,color,amsmath}
\usepackage{lscape, pbox, float, hyperref}
\usepackage{graphicx, listings, color}
\usepackage[utf8]{inputenc}
\usepackage[english]{babel}
\usepackage[normalem]{ulem}  
\newtheorem{lemma}{Lemma}

\usepackage[table]{xcolor}

\newcommand{\R}{\mathbb{R}}

\newtheorem{theorem}{Theorem}

\newcommand*\samethanks[1][\value{footnote}]{\footnotemark[#1]}

\title{Flexible Models for Microclustering with\\ Application to Entity Resolution}

\author{
Giacomo Zanella\thanks{Giacomo Zanella and Brenda Betancourt are joint first authors.}\\
Department of Decision Sciences\\
Bocconi University\\
\texttt{\small giacomo.zanella@unibocconi.it}\\
\And
Brenda Betancourt\samethanks\\
Department of Statistical Science\\
Duke University\\ 
\texttt{\small bb222@stat.duke.edu}\\
\AND
Hanna Wallach\\
Microsoft Research\\
\texttt{\small hanna@dirichlet.net}\\
\And
Jeffrey Miller\\
Department of Biostatistics\\
Harvard University\\
\texttt{\small jwmiller@hsph.harvard.edu}\\
\And
Abbas Zaidi\\
Department of Statistical Science\\
Duke University\\ 
\texttt{\small amz19@stat.duke.edu}\\
\And
Rebecca C. Steorts\\
Departments of Statistical Science and Computer Science\\
Duke University\\ 
\texttt{\small beka@stat.duke.edu}\\
}

\newcommand{\bpi}{\boldsymbol{\pi}}
\newcommand{\g}{\,|\,}
\newcommand{\clusters}{\boldsymbol{\kappa}}
\newcommand{\cluster}[1]{\kappa_{#1}}
\newcommand{\sizes}{\boldsymbol{\mu}}
\newcommand{\size}[1]{\mu_{#1}}

\newcommand{\edist}{\boldsymbol{\gamma}}
\newcommand{\shape}{\eta}
\newcommand{\rate}{s}
\newcommand{\betaA}{u}
\newcommand{\betaB}{v}

\newcommand{\teq}{\!=\!}
\newcommand\iid{\mathrel{\stackrel{\makebox[0pt]{\mbox{\normalfont\tiny
          \textrm{iid}}}}{\sim}}}
\renewcommand{\liminf}{\operatornamewithlimits{\textrm{lim\,inf}}}

\begin{document}

\maketitle

\begin{abstract}
Most generative models for clustering implicitly assume that the
number of data points in each cluster grows linearly with the total
number of data points. Finite mixture models, Dirichlet process
mixture models, and Pitman--Yor process mixture models make this
assumption, as do all other infinitely exchangeable clustering
models. However, for some applications, this assumption is
inappropriate. For example, when performing entity resolution, the
size of each cluster should be unrelated to the size of the data set,
and each cluster should contain a negligible fraction of the total
number of data points. These applications require models that yield
clusters whose sizes grow sublinearly with the size of the data
set. We address this requirement by defining the microclustering
property and introducing a new class of models that can exhibit this
property. We compare models within this class to two commonly used
clustering models using four entity-resolution data sets.
\end{abstract}

\section{Introduction}
\label{sec:introduction}

Many clustering applications require models that assume cluster sizes
grow linearly with the size of the data set. These applications
include topic modeling, inferring population structure, and
discriminating among cancer subtypes. Infinitely exchangeable
clustering models, including finite mixture models, Dirichlet process
mixture models, and Pitman--Yor process mixture models, all make this
linear-growth assumption, and have seen numerous successes when used
in these contexts. For other clustering applications, such as entity
resolution, this assumption is inappropriate. Entity resolution
(including record linkage and de-duplication) involves identifying
duplicate\footnote{In the entity resolution literature, the term
  ``duplicate records'' does not mean that the records are identical,
  but rather that the records are corrupted, degraded, or otherwise
  noisy representations of the same entity.}  records in noisy
databases~\cite{christen12data,christen12survey}, traditionally by
directly linking records to one another. Unfortunately, this
traditional approach is computationally infeasible for large data
sets---a serious limitation in ``the age of big
data''~\cite{christen12data,winkler06overview}. As a result,
researchers increasingly treat entity resolution as a clustering
problem, where each entity is implicitly associated with one or more
records and the inference goal is to recover the latent entities
(clusters) that correspond to the observed records (data
points)~\cite{steorts??bayesian,steorts15entity,steorts14smered}. In
contrast to other clustering applications, the number of data points
in each cluster should remain small, even for large data
sets. Applications like this require models that yield clusters whose
sizes grow sublinearly with the total number of data points
\cite{broderick14variational}. To address this requirement, we
define the microclustering property in
section~\ref{sec:microclustering} and, in section~\ref{sec:micro},
introduce a new class of models that can exhibit this property. In
section~\ref{sec:experiments}, we compare two models within this class
to two commonly used infinitely exchangeable clustering
models.\looseness=-1

\section{The Microclustering Property}
\label{sec:microclustering}

To cluster $N$ data points $x_1, \ldots, x_N$ using a partition-based
Bayesian clustering model, one first places a prior over partitions of
$[N] = \{ 1, \ldots, N \}$. Then, given a partition $C_N$ of $[N]$,
one models the data points in each part $c \in C_N$ as jointly
distributed according to some chosen distribution. Finally, one
computes the posterior distribution over partitions and, e.g., uses it
to identify probable partitions of $[N]$. Mixture models are a
well-known type of partition-based Bayesian clustering model, in which
$C_N$ is implicitly represented by a set of cluster assignments $z_1,
\ldots, z_N$. These cluster assignments can be regarded as the first
$N$ elements of an infinite sequence $z_1, z_2, \ldots$, drawn a
priori from\looseness=-1
\begin{equation}
  \bpi \sim H \quad \textrm{and} \quad
 z_1, z_2, \ldots \g \bpi \iid \bpi,
 \label{eqn:mixture}
\end{equation}
where $H$ is a prior over $\bpi$ and $\bpi$ is a vector of mixture
weights with $\sum_l \pi_l \teq 1$ and $\pi_l \geq 0$ for
all~$l$. Commonly used mixture models include (a) finite mixtures
where the dimensionality of $\bpi$ is fixed and $H$ is usually a
Dirichlet distribution; (b) finite mixtures where the dimensionality
of $\bpi$ is a random
variable~\cite{richardson97bayesian,miller15mixture}; (c) Dirichlet
process (DP) mixtures where the dimensionality of $\bpi$ is
infinite~\cite{sethuraman94constructive}; and (d) Pitman--Yor process
(PYP) mixtures, which generalize DP
mixtures~\cite{ishwaran03generalized}.

Equation~\ref{eqn:mixture} implicitly defines a prior over partitions
of $\mathbb{N} = \{ 1, 2, \ldots \}$. Any random partition
$C_{\mathbb{N}}$ of $\mathbb{N}$ induces a sequence of random
partitions $(C_N : N=1, 2, \ldots)$, where $C_N$ is a partition of
$[N]$. Via the strong law of large numbers, the cluster sizes in any
such sequence obtained via equation~\ref{eqn:mixture} grow linearly
with $N$ because, with probability one, for all $l$, $\frac{1}{N}
\sum_{n=1}^N I(z_n \teq l) \rightarrow \pi_l$ as $N \rightarrow
\infty$, where $I(\cdot)$ denotes the indicator
function. Unfortunately, this linear growth assumption is not
appropriate for entity resolution and other applications that require
clusters whose sizes grow sublinearly with $N$.

To address this requirement, we therefore define the microclustering
property: A sequence of random partitions $(C_N : N=1, 2, \ldots)$
exhibits the microclustering property if $M_N$ is $o_p(N)$, where
$M_N$ is the size of the largest cluster in $C_N$, or, equivalently,
if $M_N \,/\, N \rightarrow 0$ in probability as $N \rightarrow
\infty$.

A clustering model exhibits the microclustering property if the
sequence of random partitions implied by that model satisfies the
above definition. No mixture model can exhibit the microclustering
property (unless its parameters are allowed to vary with $N$). In
fact, Kingman's paintbox
theorem~\cite{kingman78representation,aldous85exchangeability} implies
that any exchangeable partition of $\mathbb{N}$, such as a partition
obtained using equation~\ref{eqn:mixture}, is either equal to the
trivial partition in which each part contains one element or satisfies
$\liminf_{N \rightarrow \infty} M_N \,/\, N > 0$ with positive
probability. By Kolmogorov's extension theorem, a sequence of random
partitions $(C_N : N=1, 2, \ldots)$ corresponds to an exchangeable
random partition of $\mathbb{N}$ whenever (a) each $C_N$ is finitely
exchangeable (i.e., its probability is invariant under permutations of
$\{1,\dots,N\}$) and (b) the sequence is projective (also known as
consistent in distribution)---i.e., if $N' \!<\! N$, the distribution
over $C_{N'}$ coincides with the marginal
distribution over partitions of $[N']$ induced by the distribution over $C_N$.
Therefore, to obtain a nontrivial model that
exhibits the microclustering property, we must sacrifice either (a)
or (b). Previous work~\cite{wallach10alternative} sacrificed (a);
in this paper, we instead sacrifice (b). 

Sacrificing finite exchangeability and sacrificing projectivity have
very different consequences. If a partition-based Bayesian clustering
model is not finitely exchangeable, then inference will depend on the
order of the data points. For most applications, this consequence is
undesirable---there is no reason to believe that the order of the data
points is meaningful. In contrast, if a model lacks projectivity, then
the implied joint distribution over a subset of the data points in a
data set will not be the same as the joint distribution obtained by
modeling the subset directly. In the context of entity resolution,
sacrificing projectivity is a more natural and less restrictive choice
than sacrificing finite exchangeability.\looseness=-1

\section{Kolchin Partition Models for Microclustering}
\label{sec:micro}

We introduce a new class of Bayesian models for microclustering by
placing a prior on the number of clusters $K$ and, given $K$, modeling
the cluster sizes $N_1, \ldots, N_K$ directly.  We start by defining
\begin{equation}
  K \sim \clusters \quad \textrm{and} \quad
 N_1,\ldots, N_K \g K \iid \sizes,
\label{eq:FMMC}
\end{equation}
where $\clusters=(\cluster{1},\cluster{2},\dots)$ and
$\sizes=(\size{1},\size{2},\dots)$ are probability distributions over
$\mathbb{N} = \{1,2,\ldots\}$. We then define $N = \sum_{k=1}^K N_k$
and, given $N_1, \ldots, N_K$, generate a set of cluster assignments
$z_1, \ldots, z_N$ by drawing a vector uniformly at random from the
set of permutations of $(\underbrace{1,\ldots,1}_\text{$N_1$
  times},\underbrace{2,\ldots,2}_\text{$N_2$
  times},\ldots\ldots,\underbrace{K,\ldots,K}_\text{$N_K$ times})$.
The cluster assignments $z_1, \ldots, z_N$ induce a random partition
$C_N$ of $[N]$, where $N$ is itself a random variable---i.e., $C_N$ is
a random partition of a random number of elements. We refer to the
resulting class of marginal distributions over $C_N$ as Kolchin
partition (KP) models~\citep{kolchin71problem,pitman06combinatorial}
because the form of equation~\ref{eq:FMMC} is closely related to
Kolchin's representation theorem for Gibbs-type partitions (see, e.g.,
\citealp[theorem 1.2]{pitman06combinatorial}). For appropriate choices
of $\clusters$ and $\sizes$, KP models can exhibit the microclustering
property (see appendix B for an example).

If $\mathscr{C}_N$ denotes the set of all possible partitions
of $[N]$, then $\bigcup_{N=1}^{\infty} \mathscr{C}_N$ is the set of
all possible partitions of $[N]$ for all $N\in\mathbb{N}$.
The probability of any given partition $C_N\in\bigcup_{N=1}^{\infty}
\mathscr{C}_N$ is
\begin{equation}
  \label{eq:size_models_pmf}
P(C_N) =
\frac{|C_N|!\,\cluster{|C_N|}}{N!}\left(\prod_{c \in C_N}
|c|!\,\size{|c|}\right),
\end{equation}
where $|\cdot|$ denotes the cardinality of a set, $|C_N|$ is the
number of clusters in $C_N$, and $|c|$ is the number of elements in
cluster c. In practice, however, $N$ is usually observed. Conditioned
on $N$, a KP model implies that $P(C_N \g N) \propto
|C_N|!\,\cluster{|C_N|}\left(\prod_{c \in C_N}
|c|!\,\size{|c|}\right)$. Equation \ref{eq:size_models_pmf} leads to a
``reseating algorithm''---much like the Chinese restaurant process
(CRP)---derived by sampling from $P(C_N \g N, C_N \!\setminus\! n)$,
where $C_N \!\setminus\! n$ is the partition obtained by removing
element $n$ from $C_N$:
\begin{itemize}
\item for $n=1,\ldots, N$, reassign element $n$ to
\begin{itemize}
\item an existing cluster $c \in C_{N}\!\setminus\! n$ with
  probability $\propto
  \left(|c|+1\right)\frac{\size{(|c|+1)}}{\size{|c|}}$
\item or a new cluster with probability $\propto
\left(|C_N \!\setminus\! n|+1\right)\tfrac{\cluster{(|C_N
    \!\setminus\! n|+1)}}{\cluster{|C_N \!\setminus\! n|}}\size{1}$.
\end{itemize}
\end{itemize}
We can use this reseating algorithm to draw samples from $P(C_N \g
N)$; however, unlike the CRP, it does not produce an exact sample if
it is used to incrementally construct a partition from the empty
set. In practice, this limitation does not lead to any negative
consequences because standard posterior inference sampling methods do
not rely on this property. When a KP model is used as the prior in a
partition-based clustering model---e.g., as an alternative to
equation~\ref{eqn:mixture}---the resulting Gibbs sampling algorithm
for $C_N$ is similar to this reseating algorithm, but accompanied by
likelihood terms. Unfortunately, this algorithm is slow for large data
sets. In appendix C, we therefore propose a faster Gibbs sampling
algorithm---the chaperones algorithm---that is particularly well
suited to microclustering.\looseness=-1

In sections~\ref{sec:nbnb} and~\ref{sec:nbd}, we introduce two related
KP models for microclustering, and in section~\ref{sec:application} we
explain how KP models can be applied in the context of entity
resolution with categorical data.

\subsection{The NBNB Model}
\label{sec:nbnb}

We start with equation~\ref{eq:size_models_pmf} and define 
\begin{equation}
\label{eqn:nbnb}
\clusters = \textrm{NegBin}\left(a,q\right) \quad \textrm{and} \quad \sizes =
\textrm{NegBin}\left(r,p\right),
\end{equation}
where $\textrm{NegBin}(a,q)$ and $\textrm{NegBin}(r,p)$ are negative
binomial distributions truncated to $\mathbb{N} = \{1,2,\dots\}$. We
assume that $a \!>\! 0$ and $q \!\in\! (0, 1)$ are fixed
hyperparameters, while $r$ and $p$ are distributed as $r \sim
\textrm{Gam}(\shape_r, \rate_r)$ and $p \sim
\textrm{Beta}(\betaA_p,\betaB_p)$ for fixed $\shape_r$, $\rate_r$,
$\betaA_p$ and $\betaB_p$.\footnote{We use the shape-and-rate
  parameterization of the gamma distribution.} We refer to the
resulting marginal distribution over $C_N$ as the negative
binomial--negative binomial (NBNB) model.

By substituting equation~\ref{eqn:nbnb} into equation
\ref{eq:size_models_pmf}, we obtain the probability of $C_N$
conditioned $N$:
\begin{equation}
  \label{eqn:NBNB_conditional}
P(C_N \g N,a, q, r,p) \propto \Gamma\left(|C_N|+a\right) \beta^{|C_N|}
\prod_{c \in C_N}
\frac{\Gamma\left(|c|+r\right)}{\Gamma\left(r\right)}\,,
\end{equation}
where $\beta=\frac{q\,(1-p)^{r}}{1-(1-p)^r}$. We provide the complete
derivation of equation~\ref{eqn:NBNB_conditional}, along with the
conditional posterior distributions over $r$ and $p$, in appendix
A.2. Posterior inference for the NBNB model involves alternating
between (a) sampling $C_N$ from $P(C_N \g N, a, q, r, p)$ using the
chaperones algorithm and (b) sampling $r$ and $p$ from their
respective conditional posteriors using, e.g., slice
sampling~\cite{neal03slice}.

\subsection{The NBD Model}
\label{sec:nbd}

Although $\clusters = \textrm{NegBin}\left(a,q\right)$ will yield
plausible values of $K$, $\sizes = \textrm{NegBin}\left(r, p\right)$
may not be sufficiently flexible to capture realistic properties of
$N_1, \ldots, N_K$, especially when $K$ is large. For example, in a
record-linkage application involving two otherwise noise-free
databases containing thousands of records, $K$ will be large and each
$N_k$ will be at most two. A negative binomial distribution cannot
capture this property. We therefore define a second KP model---the
negative binomial--Dirichlet (NBD) model---by taking a nonparametric
approach to modeling $N_1, \ldots, N_K$ and drawing $\sizes$ from an
infinite-dimensional Dirichlet distribution over the positive
integers:
\begin{equation}
\label{eqn:nbd}
\clusters = \textrm{NegBin}\left(a,q\right) \quad \text{and} \quad
\sizes  \g \alpha, \sizes^{(0)} \sim
\textrm{Dir}\left(\alpha, \sizes^{(0)}\right),
\end{equation}
where $\alpha>0$ is a fixed concentration parameter and
$\sizes^{(0)}=(\mu^{(0)}_1,\mu^{(0)}_2,\cdots)$ is a fixed base
measure with $\sum_{m=1}^\infty \mu^{(0)}_m=1$ and $\mu^{(0)}_m\geq 0$
for all $m$. The probability of $C_N$ conditioned on $N$ and $\sizes$
is
\begin{equation}
  \label{eq:C_N_for_NBD}
P(C_N \g N,a, q, \sizes) \propto \Gamma\left(|C_N|+a\right) q^{|C_N|}
\prod_{c \in C_N} |c|!\,\size{|c|}.
\end{equation}
Posterior inference for the NBD model involves alternating between (a)
sampling $C_N$ from $P(C_N \g N, a, q, \sizes)$ using the chaperones
algorithm and (b) sampling $\sizes$ from its conditional
posterior:\looseness=-1
\begin{equation}
  \label{eqn:sizes_cond_posterior}
\sizes \g C_N, \alpha, \boldsymbol{\mu}^{(0)} \sim
\textrm{Dir}\left(\alpha\,\size{1}^{(0)}+L_{1},\alpha\,\size{2}^{(0)}+L_{2},\dots\right),
\end{equation}
where $L_m$ is the number of clusters of size $m$ in $C_N$. Although
$\sizes$ is an infinite-dimensional vector, only the first $N$
elements affect $P(C_N \g a, q, \sizes)$. Therefore, it is sufficient
to sample the $(N+1)$-dimensional vector $(\size{1}, \ldots, \size{N},
1 - \sum_{m=1}^N \size{m})$ from
equation~\ref{eqn:sizes_cond_posterior}, modified accordingly, and
retain only $\size{1}, \ldots, \size{N}$. We provide complete
derivations of equations~\ref{eq:C_N_for_NBD}
and~\ref{eqn:sizes_cond_posterior} in appendix A.3.

\subsection{The Microclustering Property for the NBNB and NBD Models}
\label{sec:appendix_c}

Figure~\ref{fig:microclustering} contains empirical evidence
suggesting that the NBNB and NBD models both exhibit the
microclustering property. For each model, we generated samples of $M_N
\,/\, N$ for $N = 100, \ldots, 10^4$. For the NBNB model, we set
$a=1$, $q=0.5$, $r=1$, and $p=0.5$ and generated the samples using
rejection sampling. For the NBD model, we set $a=1$, $q=0.5$, and
$\alpha=1$ and set $\sizes^{(0)}$ to be a geometric distribution over
$\mathbb{N} = \{1, 2, \ldots\}$ with a parameter of 0.5. We generated
the samples using MCMC methods. For both models, $M_N \,/\, N$ appears
to converge to zero in probability as $N \rightarrow \infty$, as
desired.\looseness=-1

In appendix B, we also prove that a variant of the NBNB model exhibits
the microclustering property.

\begin{figure}[t]
\centering
\vspace{-4mm}
\includegraphics[width=0.75\textwidth]{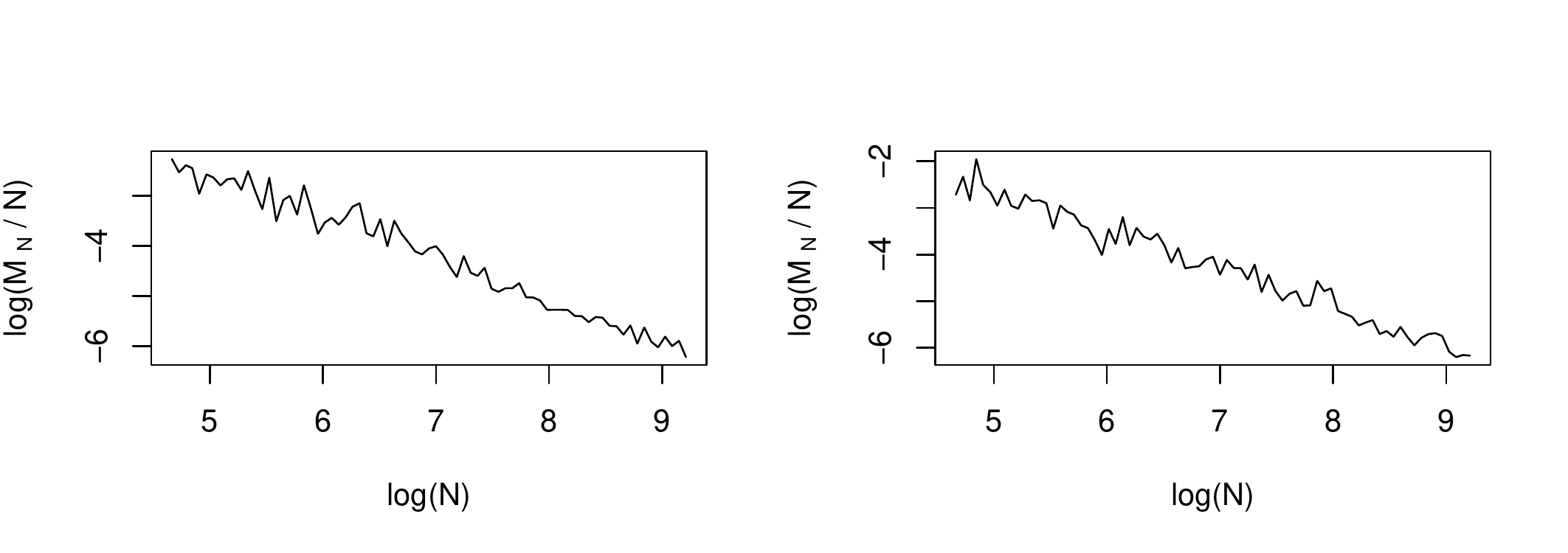}
\vspace{-2mm}
\caption{The NBNB (left) and NBD (right) models appear to exhibit
  the microclustering property.\looseness=-1}
\vspace{-4mm}
\label{fig:microclustering}
\end{figure}

\subsection{Application to Entity Resolution}
\label{sec:application}

KP models can be used to perform entity resolution. In this context,
the data points $x_1, \ldots, x_N$ are observed records and the $K$
clusters are latent entities. If each record consists of $F$
categorical fields, then\looseness=-1
\begin{align}
  C_N &\sim \textrm{KP model}\\
  \label{eqn:theta}
  \boldsymbol{\theta}_{fk} \g \delta_f, \edist_f &\sim
  \textrm{Dir}\left(\delta_f, \edist_f \right)\\
  z_n &\sim \zeta(C_N, n)\\
  \label{eqn:x}
  x_{fn} \g z_n, \boldsymbol{\theta}_{f1}, \ldots,
  \boldsymbol{\theta}_{fK} &\sim
  \textrm{Cat}\left(\boldsymbol{\theta}_{f z_n}\right)
\end{align}
for $f=1, \ldots, F$, $k = 1, \ldots, K$, and $n = 1, \ldots, N$,
where $\zeta(C_N, n)$ maps the $n^{\textrm{th}}$ record to a latent
cluster assignment $z_n$ according to $C_N$. We assume that $\delta_f
> 0$ is distributed as $\delta_f \sim \textrm{Gam}\left(1, 1\right)$,
while $\edist_f$ is fixed. Via Dirichlet--multinomial conjugacy, we
can marginalize over $\boldsymbol{\theta}_{11}, \ldots,
\boldsymbol{\theta}_{FK}$ to obtain a closed-form expression for
$P(x_1, \ldots, x_N \g z_1, \ldots, z_N, \delta_f,
\edist_f)$. Posterior inference involves alternating between (a)
sampling $C_N$ from $P(C_N \g x_1, \ldots, x_N, \delta_f)$ using the
chaperones algorithm accompanied by appropriate likelihood terms, (b)
sampling the parameters of the KP model from their conditional
posteriors, and (c) sampling $\delta_f$ from its conditional posterior
using slice sampling.\looseness=-1

\section{Experiments} 
\label{sec:experiments}

In this section, we compare two entity resolution models based on the
NBNB model and the NBD model to two similar models based on the DP
mixture model~\cite{sethuraman94constructive} and the PYP mixture
model~\cite{ishwaran03generalized}. All four models use the likelihood
in equations~\ref{eqn:theta} and~\ref{eqn:x}. For the NBNB model and
the NBD model, we set $a$ and $q$ to reflect a weakly informative
prior belief that
$\mathbb{E}[K]=\sqrt{\textrm{Var}[K]}=\frac{N}{2}$. For the NBNB
model, we set $\shape_r = \rate_r = 1$ and
$\betaA_p=\betaB_p=2$.\footnote{We used $p \sim \textrm{Beta}\left(2,
  2\right)$ because a uniform prior implies an unrealistic prior
  belief that $\mathbb{E}[N_k] = \infty$.\looseness=-1} For the NBD
model, we set $\alpha = 1$ and set $\sizes^{(0)}$ to be a geometric
distribution over $\mathbb{N} = \{1, 2, \ldots\}$ with a parameter of
0.5. This base measure reflects a prior belief that $\mathbb{E}[N_k] =
2$. Finally, to ensure a fair comparison between the two different
classes of model, we set the DP and PYP concentration parameters to
reflect a prior belief that $\mathbb{E}[K] = \frac{N}{2}$.

We assess how well each model ``fits'' four data sets typical of those
arising in real-world entity resolution applications. For each data
set, we consider four statistics: (a) the number of singleton
clusters, (b) the maximum cluster size, (c) the mean cluster size, and
(d) the 90$^{\textrm{th}}$ percentile of cluster sizes. We compare
each statistic's true value to its posterior distribution according to
each of the models. For each model and data set combination, we also
consider five entity-resolution summary statistics: (a) the posterior
expected number of clusters, (b) the posterior standard error, (c) the
false negative rate, (d) the false discovery rate, and (e) the
posterior expected value of $\delta_f = \delta$ for $f=1, \ldots,
F$. The false negative and false discovery rates are both invariant
under permutations of $1, \ldots, K$~\cite{steorts15entity,
  steorts14comparison}.\looseness=-1

\subsection{Data Sets}
\label{sec:data}

We constructed four realistic data sets, each consisting of $N$
records associated with $K$ entities.


\textbf{Italy:} We derived this data set from the Survey on Household
Income and Wealth, conducted by the Bank of Italy every two
years. There are nine categorical fields, including year of birth,
employment status, and highest level of education attained. Ground
truth is available via unique identifiers based upon social security
numbers; roughly 74\% of the clusters are singletons. We used the 2008
and 2010 databases from the Fruili region to create a record-linkage
data set consisting of $N=789$ records; each $N_k$ is at most two. We
discarded the records themselves, but preserved the number of fields,
the empirical distribution of categories for each field, the number of
clusters, and the cluster sizes. We then generated synthetic records
using equations~\ref{eqn:theta} and~\ref{eqn:x}. We created three
variants of this data set, corresponding to $\delta = 0.02, 0.05,
0.1$. For all three, we used the empirical distribution of categories
for field $f$ as $\edist_f$. By generating synthetic records in this
fashion, we preserve the pertinent characteristics of the original
data, while making it easy to isolate the impacts of the different
priors over partitions.\looseness=-1

\textbf{NLTCS5000:} We derived this data set from the National Long
Term Care Survey
(NLTCS)\footnote{\url{http://www.nltcs.aas.duke.edu/}}---a
longitudinal survey of older Americans, conducted roughly every six
years. We used four of the available fields: date of birth, sex, state
of residence, and regional office. We split date of birth into three
separate fields: day, month, and year. Ground truth is available via
social security numbers; roughly 68\% of the clusters are
singletons. We used the 1982, 1989, and 1994 databases and
down-sampled the records, preserving the proportion of clusters of
each size and the maximum cluster size, to create a record-linkage
data set of $N=5,000$ records; each $N_k$ is at most three. We then
generated synthetic records using the same approach that we used to
create the Italy data set.\looseness=-1

\textbf{Syria2000 and SyriaSizes:} We constructed these data sets from
data collected by four human-rights groups between 2011 and 2014 on
people killed in the Syrian
conflict~\cite{price13updated,price14updated}. Hand-matched ground
truth is available from the Human Rights Data Analysis Group. Because
the records were hand matched, the data are noisy and potentially
biased. Performing entity resolution is non-trivial because there are
only three categorical fields: gender, governorate, and date of
death. We split date of death, which is present for most records, into
three separate fields: day, month, and year. However, because the
records only span four years, the year field conveys little
information. In addition, most records are male, and there are only
fourteen governorates. We created the Syria2000 data set by
down-sampling the records, preserving the proportion of clusters of
each size, to create a data set of $N=2,000$ records; the maximum
cluster size is five. We created the SyriaSizes data set by
down-sampling the records, preserving some of the larger clusters
(which necessarily contain within-database duplications), to create a
data set of $N=6,700$ records; the maximum cluster size is ten. We
provide the empirical distribution over cluster sizes for each data
set in appendix D. We generated synthetic records for both data sets
using the same approach that we used to create the Italy data
set.\looseness=-1

\subsection{Results}
\label{sec:results}

We report the results of our experiments in table~\ref{tab:1} and
figure~\ref{fig:1}. The NBNB and NBD models outperformed the DP and
PYP models for almost all variants of the Italy and NLTCS5000 data
sets. In general, the NBD model performed the best of the four, and
the differences between the models' performance grew as the value of
$\delta$ increased. For the Syria2000 and SyriaSizes data sets, we see
no consistent pattern to the models' abilities to recover the true
values of the data-set statistics. Moreover, all four models had poor
false negative rates, and false discovery rates---most likely because
these data sets are extremely noisy and contain very few fields. We
suspect that no entity resolution model would perform well for these
data sets. For three of the four data sets, the exception being the
Syria2000 data set, the DP model and the PYP model both greatly
overestimated the number of clusters for larger values of
$\delta$. Taken together, these results suggest that the flexibility
of the NBNB and NBD models make them more appropriate choices for most
entity resolution applications.

\begin{figure*}[htp]
  \centering
      {\makebox[\textwidth]{
          \begin{subfigure}{\textwidth}
            \centering
            \includegraphics[width=0.9\textwidth,height=150pt]{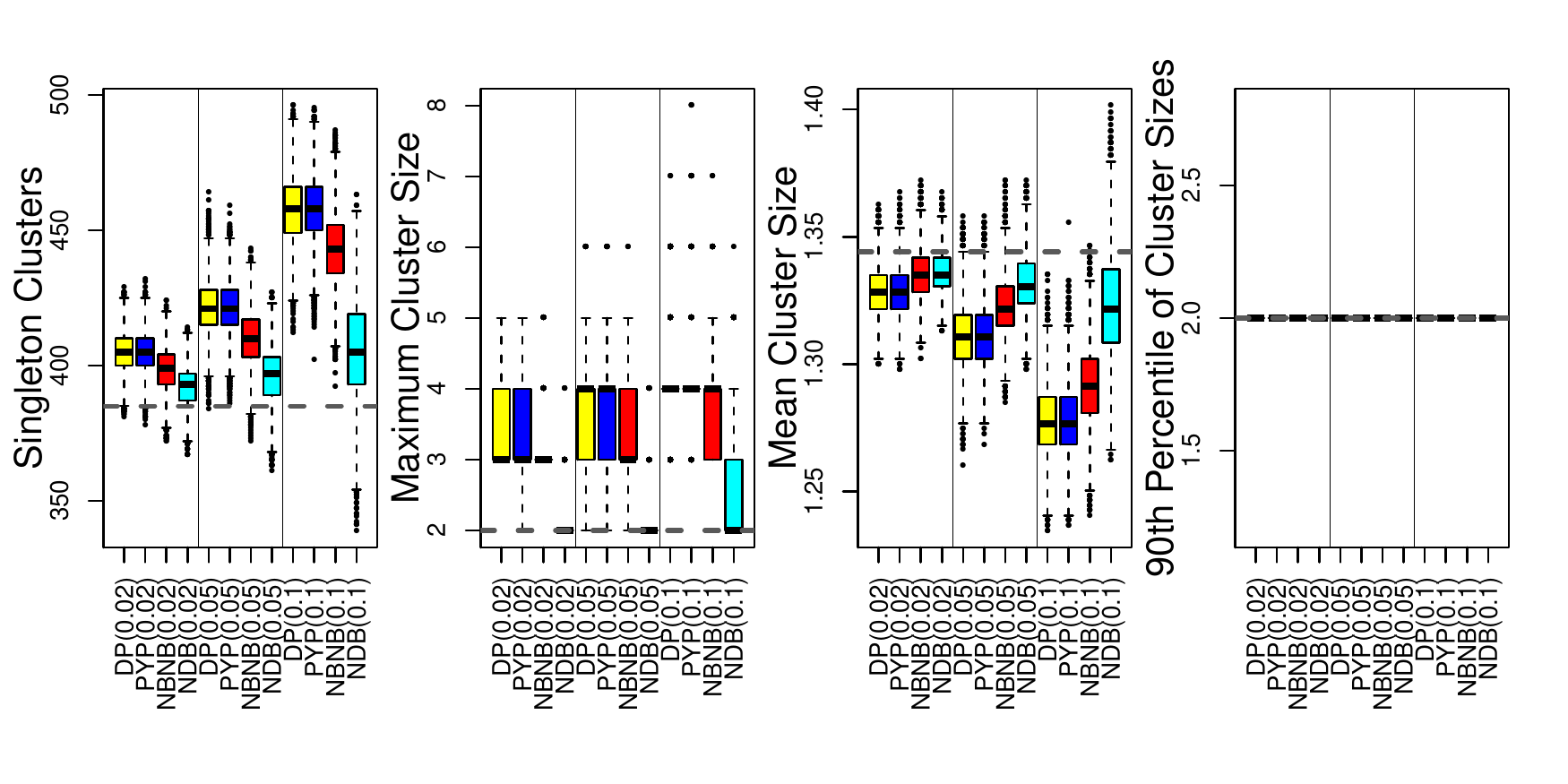}
            \vspace{-4mm}
            \caption{Italy: NBD model > NBNB model > PYP mixture model
              > DP mixture model.}
          \end{subfigure}  
        }
        \vspace{-2mm}
      }
      {\makebox[\textwidth]{
          \begin{subfigure}{\textwidth}
            \centering    
            \includegraphics[width=0.9\textwidth,height=150pt]{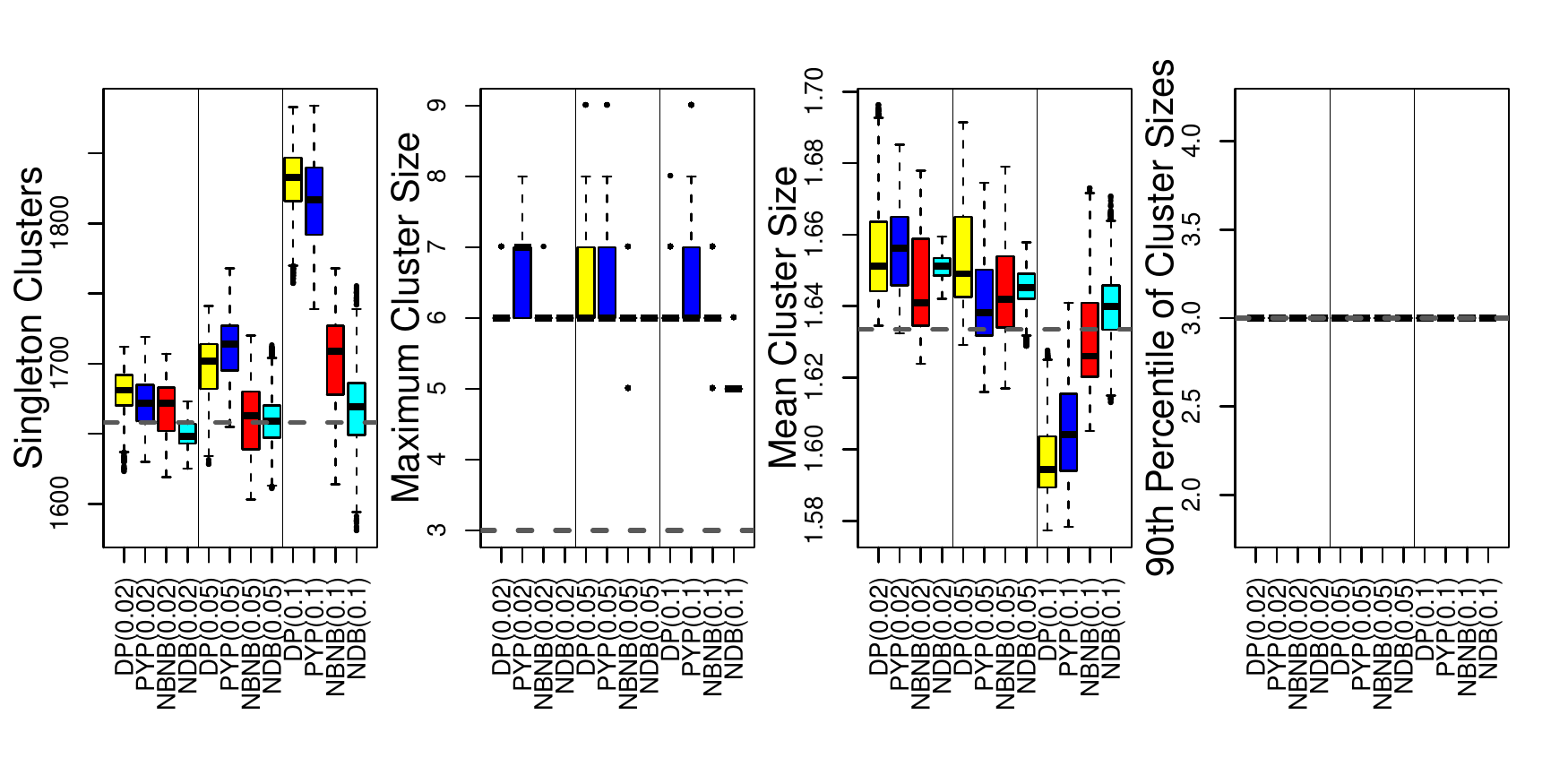}
            \vspace{-4mm}
            \caption{NLTCS5000: NBD model > NBNB model > PYP mixture
              model > DP mixture model.}
          \end{subfigure}
          }
        \vspace{-2mm}
      }
      {\makebox[\textwidth]{
          \begin{subfigure}{\textwidth}
            \centering    
            \includegraphics[width=0.9\textwidth,height=150pt]{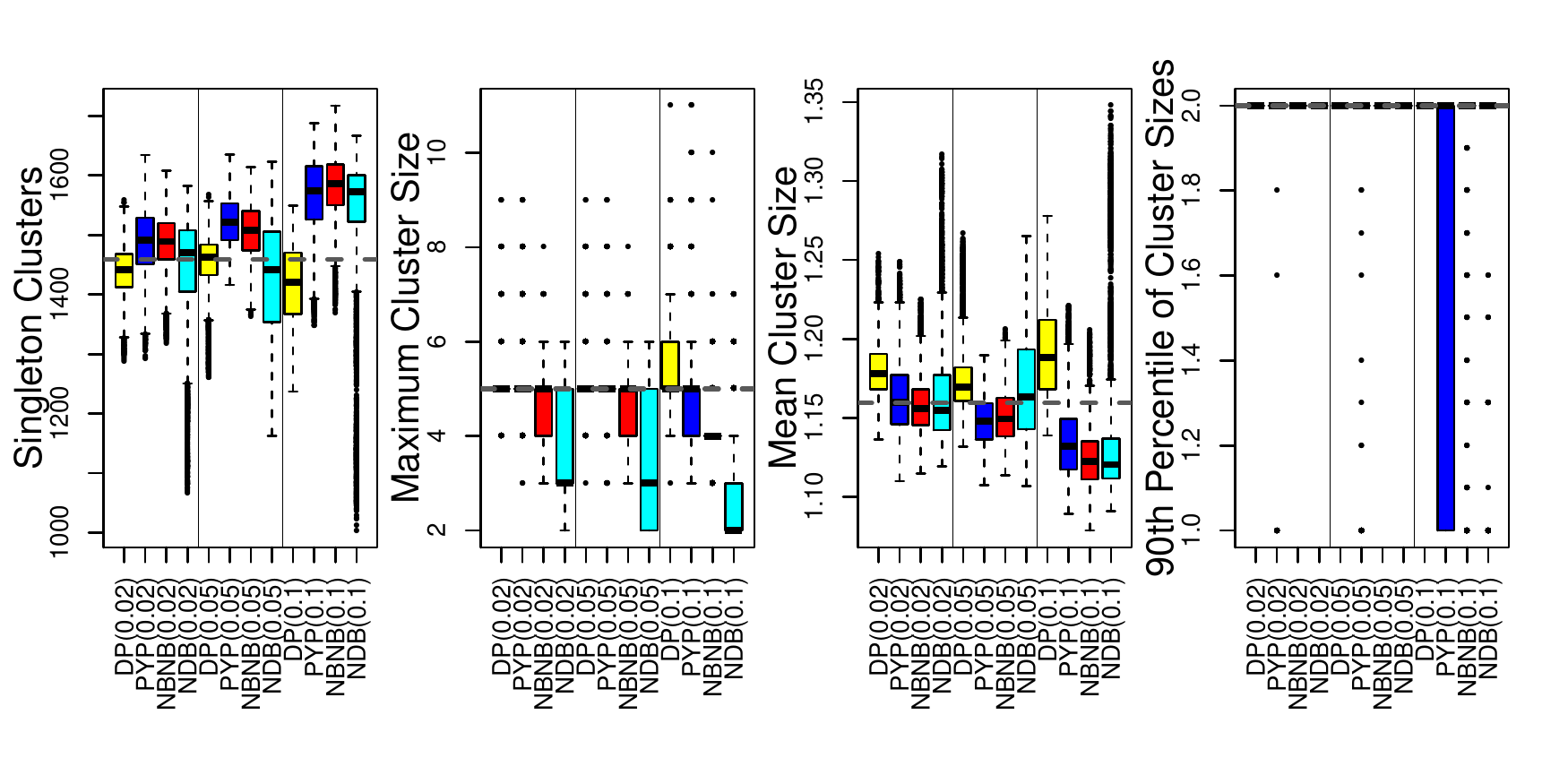}
            \vspace{-4mm}
            \caption{Syria2000: the models perform similarly because there
              are so few fields.}
          \end{subfigure}
        }
        \vspace{-2mm}
      }
{\makebox[\textwidth]{
    \begin{subfigure}{\textwidth}
      \centering    
      \includegraphics[width=0.9\textwidth,height=150pt]{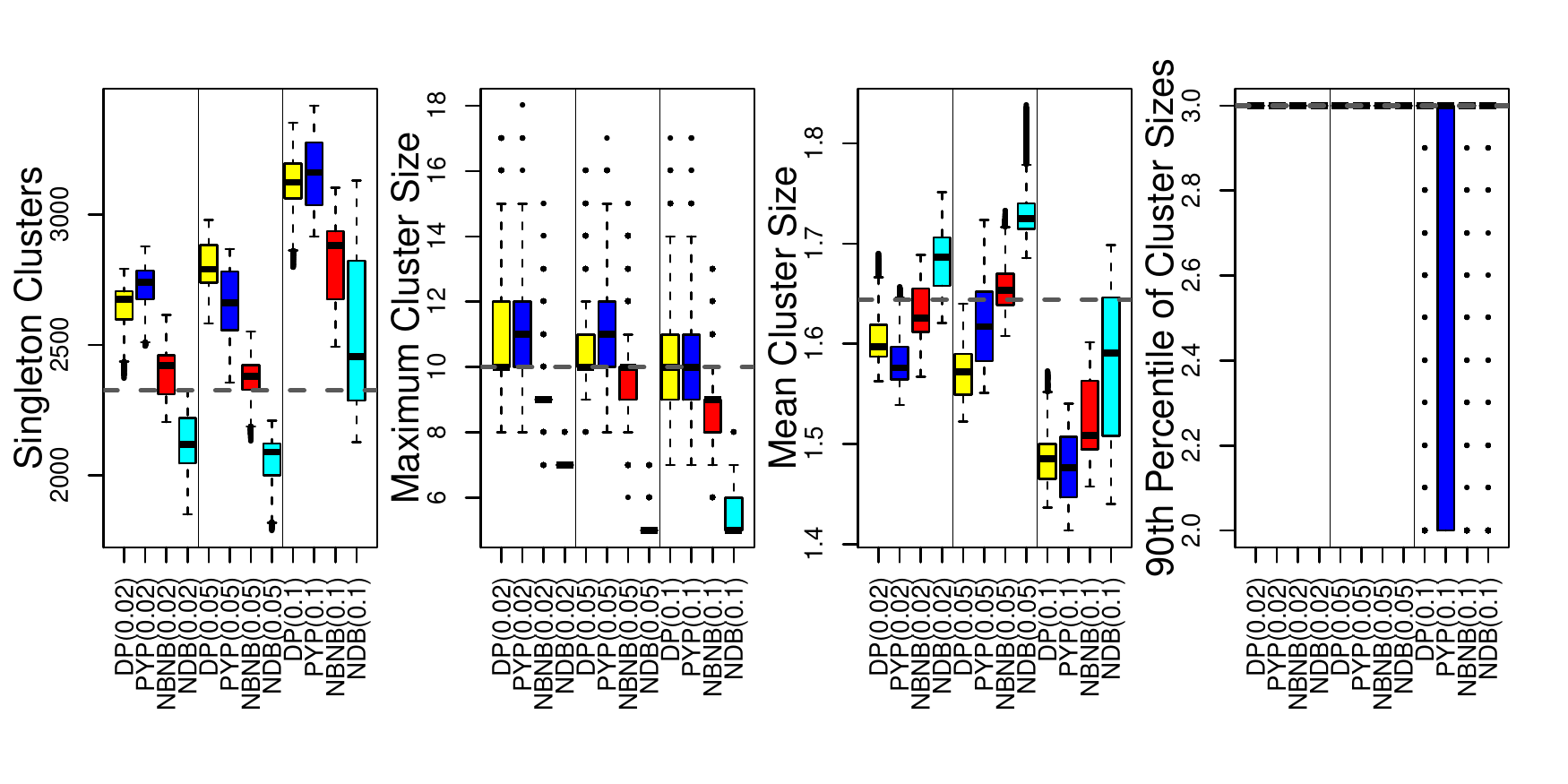}
      \vspace{-4mm}
      \caption{SyriaSizes: the models perform similarly because there
        are so few fields.}
    \end{subfigure}
  }
}
\caption{Box plots depicting the true value (dashed line) of each
  data-set statistic for each variant of each data set, as well as its
  posterior distribution according to each of the four entity
  resolution models.\looseness=-1}
\label{fig:1}
\end{figure*}

\begin{table}[htp]
\caption{Entity-resolution summary statistics---the posterior expected
  number of clusters, the posterior standard error, the false negative
  rate (lower is better), the false discovery rate (lower is better),
  and the posterior expected value of $\delta$---for each variant of
  each data set and each of the four models.\looseness=-1}
\vspace{2.5mm}
\footnotesize
\centering
\begin{tabular}{cccrccccc}
\toprule
\normalsize{Data Set} & \normalsize{True $K$} & \normalsize{Variant} & \normalsize{Model} & \normalsize{$\mathbb{E}[K]$} & \normalsize{Std. Err.} & \normalsize{FNR} & \normalsize{FDR} & \normalsize{$\mathbb{E}[\delta]$} \\ 
\midrule
Italy & 587 & $\delta = 0.02$ & DP & 594.00 & 4.51 & 0.07 & 0.03 & 0.02 \\
 & & & PYP & 593.90 & 4.52 & 0.07 & 0.03 & 0.02  \\
 & & & NBNB & 591.00 & 4.43 & 0.04 & 0.03 & 0.02 \\
 & & & NBD & 590.50 & 3.64 & 0.03 & 0.00 & 0.02 \\
\cmidrule{3-9}
 & & $\delta = 0.05$ & DP & 601.60 & 5.89 & 0.13 & 0.03 & 0.03 \\
 & & & PYP & 601.50 & 5.90 & 0.13 & 0.03 & 0.04 \\
 & & & NBNB & 596.40 & 5.79 & 0.11 & 0.04 & 0.04  \\
 & & & NBD & 592.60 & 5.20 & 0.09 & 0.04 & 0.04 \\
\cmidrule{3-9}
 & & $\delta = 0.1$ & DP & 617.40 & 7.23 & 0.27 & 0.06 & 0.07 \\
 & & & PYP & 617.40 & 7.22 & 0.27 & 0.05 & 0.07 \\
 & & & NBNB & 610.90 & 7.81 & 0.24 & 0.06 & 0.08 \\
 & & & NBD & 596.60 & 9.37 & 0.18 & 0.05 & 0.10 \\
\midrule
NLTCS5000 & 3,061 & $\delta = 0.02$ & DP & 3021.70 & 24.96 & 0.02 & 0.11 & 0.03  \\
& & & PYP & 3018.70 & 25.69 & 0.03 & 0.11 & 0.03 \\
& & & NBNB & 3037.80 & 25.18 & 0.02 & 0.07 & 0.02 \\
& & & NBD & 3028.20 & 5.65 & 0.01 & 0.09 & 0.03 \\
\cmidrule{3-9}
& & $\delta = 0.05$ & DP & 3024.00 & 26.15 & 0.05 & 0.13 & 0.06 \\
& & & PYP & 3045.80 & 23.66 & 0.05 & 0.10 & 0.05 \\
& & & NBNB & 3040.90 & 24.86 & 0.04 & 0.06 & 0.05 \\
& & & NBD & 3039.30 & 10.17 & 0.03 & 0.07 & 0.06 \\
\cmidrule{3-9}
& & $\delta = 0.1$ & DP & 3130.50 & 21.44 & 0.12 & 0.09 & 0.10 \\
& & & PYP & 3115.10 & 25.73 & 0.13 & 0.10 & 0.10\\
& & & NBNB & 3067.30 & 25.31 & 0.11 & 0.08 & 0.11 \\
& & & NBD & 3049.10 & 16.48 & 0.09 & 0.08 & 0.12\\
\midrule
Syria2000 & 1,725 & $\delta = 0.02$ & DP & 1695.20 & 25.40 & 0.70 & 0.27 & 0.07 \\
& & & PYP & 1719.70 & 36.10 & 0.71 & 0.26 & 0.04 \\
& & & NBNB & 1726.80 & 27.96 & 0.70 & 0.28 & 0.05 \\
& & & NBD & 1715.20 & 51.56 & 0.67 & 0.28 & 0.02 \\
\cmidrule{3-9}
& & $\delta = 0.05$ & DP & 1701.80 & 31.15 & 0.77 & 0.31 & 0.07  \\
& & & PYP & 1742.90 & 24.33 & 0.75 & 0.32 & 0.04  \\
& & & NBNB & 1738.30 & 25.48 & 0.74 & 0.31 & 0.04 \\
& & & NBD & 1711.40 & 47.10 & 0.69 & 0.32 & 0.03 \\
\cmidrule{3-9}
& & $\delta = 0.1$ & DP & 1678.10 & 40.56 & 0.81 & 0.19 & 0.18 \\
& & & PYP & 1761.20 & 39.38 & 0.81 & 0.22 & 0.08 \\
& & & NBNB & 1779.40 & 29.84 & 0.77 & 0.26 & 0.04 \\
& & & NBD & 1757.30 & 73.60 & 0.74 & 0.25 & 0.03  \\
\midrule
SyriaSizes & 4,075 & $\delta = 0.02$ & DP & 4175.70 & 66.04 & 0.65 & 0.17 & 0.01 \\
& & & PYP & 4234.30 & 68.55 & 0.64 & 0.19 & 0.01 \\
& & & NBNB & 4108.70 & 70.56 & 0.65 & 0.19 & 0.01 \\
& & & NBD & 3979.50 & 70.85 & 0.68 & 0.20 & 0.03  \\
\cmidrule{3-9}
& & $\delta = 0.05$ & DP & 4260.00 & 77.18 & 0.71 & 0.21 & 0.02 \\
& & & PYP & 4139.10 & 104.22 & 0.75 & 0.18 & 0.04 \\
& & & NBNB & 4047.10 & 55.18 & 0.73 & 0.20 & 0.04  \\
& & & NBD & 3863.90 & 68.05 & 0.75 & 0.22 & 0.07 \\
\cmidrule{3-9}
& & $\delta = 0.1$ & DP & 4507.40 & 82.27 & 0.80 & 0.19 & 0.03  \\
& & & PYP & 4540.30 & 100.53 & 0.80 & 0.20 & 0.03\\
& & & NBNB & 4400.60 & 111.91 & 0.80 & 0.23 & 0.03 \\
& & & NBD & 4251.90 & 203.23 & 0.82 & 0.25 & 0.04\\
\bottomrule
\end{tabular}
\label{tab:1}
\end{table}

\section{Summary}
\label{sec:disc}

Infinitely exchangeable clustering models assume that cluster sizes
grow linearly with the size of the data set. Although this assumption
is reasonable for some applications, it is inappropriate for
others. For example, when entity resolution is treated as a clustering
problem, the number of data points in each cluster should remain
small, even for large data sets. Applications like this require models
that yield clusters whose sizes grow sublinearly with the size of the
data set. We introduced the microclustering property as one way to
characterize models that address this requirement. We then introduced
a highly flexible class of models---KP models---that can exhibit this
property. We presented two models within this class---the NBNB model
and the NBD model---and showed that they are better suited to entity
resolution applications than two infinitely exchangeable clustering
models. We therefore recommend KP models for applications where the
size of each cluster should be unrelated to the size of the data set,
and each cluster should contain a negligible fraction of the total
number of data points.\looseness=-1

\subsubsection*{Acknowledgments}

We thank Tamara Broderick, David Dunson, Merlise Clyde, and Abel
Rodriguez for conversations that helped form the ideas in this
paper. In particular, Tamara Broderick played a key role in developing
the idea of microclustering. We also thank the Human Rights Data
Analysis Group for providing us with data. This work was supported in
part by NSF grants SBE-0965436, DMS-1045153, and IIS-1320219; NIH
grant 5R01ES017436-05; the John Templeton Foundation; the
Foerster-Bernstein Postdoctoral Fellowship; the UMass Amherst CIIR;
and an EPSRC Doctoral Prize Fellowship.\looseness=-1

\bibliographystyle{unsrt}
\bibliography{references}

\appendix

\section{Derivation of $P(C_N)$}
\label{sec:appendix_a}

In this appendix, we derive $P(C_N)$ for a general KP model, as well
as the NBNB and NBD models.

\subsection{KP Models}

We start with equation 2 and note that
\begin{equation}
  \label{eqn:CN}
  P(C_N) = P(C_N \g K)\,P(K),
\end{equation}
where $K = |C_N|$. To evaluate $P(C_N \g K)$, we need to sum over
all possible cluster assignments:
\begin{equation}
  P(C_N \g K) = \sum_{z_1, \ldots, z_N \in [K]} \underbrace{P(C_N \g
  z_1, \ldots, z_N, K)}_{I(z_1, \ldots, z_N \Rightarrow C_N)}\, P(z_1, \ldots, z_N \g K).
\end{equation}
Since $N_1, \ldots, N_K$ are completely determined by $K$
and $z_1, \ldots, z_N$, it follows that
\begin{align}
P(z_1, \ldots, z_N \g K) &= P(z_1, \ldots, z_N \g N_1, \ldots, N_K, K)\, P(N_1, \ldots, N_K \g K)\\
&= \frac{\prod_{k=1}^K N_k!}{N!} \prod_{k=1}^K P(N_k \g K)
\\
&= \frac{1}{N!} \prod_{k=1}^K N_k!\,\size{N_k}.
\end{align}
Therefore,
\begin{align}
P(C_N \g K) &= \sum_{z_1, \ldots, z_N \in [K]} I(z_1,\ldots, z_N \Rightarrow C_N) \,
     \frac{1}{N!} \prod_{k=1}^K N_k!\,\size{N_k}\\
&= \frac{1}{N!} \left( \prod_{c\in C_N} |c|!\, \size{|c|} \right)
    \sum_{z_1, \ldots, z_N \in [K]} I(z_1, \ldots, z_N \Rightarrow C_N)\\
&= \frac{K!}{N!} \prod_{c\in C_N} |c|!\, \size{|c|}. \label{eqn:CN_g_K}
\end{align}
Substituting equation~\ref{eqn:CN_g_K} into equation~\ref{eqn:CN} and using $K\sim\clusters$ we obtain
\begin{equation}
\label{eq:FMMC_final_exp}
  P(C_N) = \frac{|C_N|!\,\cluster{|C_N|}}{N!} \left(\prod_{c\in C_N} |c|!\, \size{|c|}\right).
\end{equation}

\subsection{The NBNB Model}

For fixed values of $r$ and $p$, the NBNB model is a specific case of
a KP model with
\begin{equation}
  \label{eq:NBNB_both}
\cluster{k}=\frac{\Gamma\left(k+a\right) q^k\, (1-q)^{a}}{(1-(1-q)^a)\,\Gamma\left(a\right)k!}
\quad \textrm{and} \quad
\size{m}=\frac{\Gamma\left(m+r\right)p^m\,
(1-p)^{r}}{(1-(1-p)^r)\,\Gamma\left(r\right)m!}\,,
\end{equation}
for $k$ and $m$ in $\mathcal{N} = \{1,2,\dots\}$. Combining
equations~\ref{eq:FMMC_final_exp} and~\ref{eq:NBNB_both} gives
\begin{align}
P(C_N \g a, q, r, p)
&=
\frac{|C_N|!}{N!}\,
\frac{\Gamma\left(|C_N|+a\right)q^{|C_N|}\,
(1-q)^{a}}{(1-(1-q)^a)\,\Gamma\left(a\right)|C_N|!}
\prod_{c\in C_N} |c|!\,\frac{\Gamma\left(|c|+r\right)p^{|c|}\,
(1-p)^{r}}{(1-(1-p)^r)\,\Gamma\left(r\right)|c|!}\\
&= \frac{\Gamma\left(|C_N|+a\right)q^{|C_N|}\,
(1-q)^{a}}{N!\,(1-(1-q)^a)\,\Gamma\left(a\right)}
\prod_{c\in C_N} \frac{\Gamma\left(|c|+r\right)p^{|c|}\,
(1-p)^{r}}{(1-(1-p)^r)\,\Gamma\left(r\right)}.\label{eq:NBNB_appendix}
\end{align}

Conditioning on $N$ and removing constant terms, we obtain
\begin{align}\label{eq:NBNB_appendix_N}
P(C_N \g N,a,q, r,p)
&\propto\Gamma\left(|C_N|+a\right)\beta^{|C_N|}
\prod_{c\in C_N} \frac{\Gamma\left(|c|+r\right)}{\Gamma\left(r\right)},
\end{align}
where $\beta=\frac{q\,(1-p)^{r}}{1-(1-p)^r}$.
Equation \ref{eq:NBNB_appendix_N} leads to the following
reseating algorithm:
\begin{itemize}
\item for $n=1,\ldots, N$, reassign element $n$ to
\begin{itemize}
\item an existing cluster $c\in C_{N} \!\setminus\! n$ with probability $\propto
|c|+r$
\item or a new cluster with probability $\propto \left(|C_N \!\setminus\! n|+a\right)\beta$.
\end{itemize}
\end{itemize}

Adding the prior terms for $r$ and $p$ to equation
\ref{eq:NBNB_appendix} we obtain the joint distribution of $C_N$, $r$
and $p$:
\begin{align}
&P(C_N,r,p \g a, q, \eta_r, s_r,\betaA_p,\betaB_p )\notag\\
&\quad =P(r \g \eta_r, s_r)\,P(p \g \betaA_p,\betaB_p)\,P(C_N \g r,p)\\
&\quad=
\frac{r^{\eta_r-1}e^{-\frac{r}{s_r}}}{\Gamma\left(\eta_r\right)s_r^{\eta_r}}\,
\frac{p^{\betaA_p-1}(1-p)^{\betaB_p-1}}{B(\betaA_p,\betaB_p)}\times
{} \notag\\
&\quad\quad\frac{\Gamma\left(|C_N|+a\right)q^{|C_N|}\,
(1-q)^{a}}{N!\,(1-(1-q)^a)\,\Gamma\left(a\right)}
\prod_{c\in C_N} \frac{\Gamma\left(|c|+r\right)p^{|c|}\,
(1-p)^{r}}{(1-(1-p)^r)\,\Gamma\left(r\right)}\\
&\quad\propto
r^{\eta_r-1}\,e^{-\frac{r}{s_r}}\,p^{N+\betaA_p-1}\,(1-p)^{\betaB_p-1}
\left(\frac{q\,(1-p)^{r}}{1-(1-p)^r}\right)^{|C_N|} \times{}\notag\\
&\quad\quad\frac{\Gamma\left(|C_N|+a\right)}{N!}
\prod_{c\in C_N} \frac{\Gamma\left(|c|+r\right)}{\Gamma\left(r\right)}.
\end{align}

Therefore, the conditional posterior distributions over $r$ and $p$ are
\begin{align}
P(r \g C_N,p,\eta_r, s_r)  &\propto
\frac{r^{\eta_r-1}\,e^{-\frac{r}{s_r}}\,(1-p)^{r\,|C_N|}}{(1-(1-p)^r)^{|C_N|}}
\prod_{c\in C_N}
\frac{\Gamma\left(|c|-1+r\right)}{\Gamma\left(r\right)}\\
P(p \g C_N, r, \betaA_p, \betaB_p)  &\propto
\frac{p^{N+\betaA_p-1}(1-p)^{r\,|C_N|+\betaB_p-1}}{(1-(1-p)^r)^{|C_N|}}.
\end{align}

\subsection{The NBD Model}

For fixed $\sizes$, the NBD model is a specific case of a KP
model. Therefore, 
\begin{align}
P(C_N \g a, q, \sizes)
&=\frac{\Gamma\left(|C_N|+a\right)q^{|C_N|}\,
(1-q)^{a}}{N!\,(1-(1-q)^a)\Gamma\left(a\right)}
\prod_{c \in C_N} |c|!\,\size{|c|}.
\end{align}
Conditioning on $N$ and removing constant terms, we obtain
\begin{align*}
P(C_N \g N, a, q, \sizes)
&\propto\Gamma\left(|C_N|+a\right)q^{|C_N|}
\prod_{c\in |C_N|} |c|!\,\size{|c|}.
\end{align*}
Via Dirichlet--multinomial conjugacy,
\begin{equation}
  \label{eqn:sizes_cond_posterior2}
\sizes \g C_N, \alpha, \boldsymbol{\mu}^{(0)} \sim
\textrm{Dir}\left(\alpha\,\size{1}^{(0)}+L_{1},\alpha\,\size{2}^{(0)}+L_{2},\dots\right),
\end{equation}
where $L_m$ is the number of clusters of size $m$ in $C_N$. Although
$\sizes$ is an infinite-dimensional vector, only the first $N$
elements affect $P(C_N \g a, q, \sizes)$. Therefore, it is sufficient
to sample the $(N+1)$-dimensional vector $(\size{1}, \ldots, \size{N},
1 - \sum_{m=1}^N \size{m})$ from
equation~\ref{eqn:sizes_cond_posterior2}, modified accordingly:
\begin{align}
&(\size{1}, \ldots, \size{N}, 1 - \sum_{m=1}^N \size{m}) \g C_N,
  \alpha, \size{1}^{(0)}, \ldots, \size{N}^{(0)} \notag\\
&\quad \sim \textrm{Dir}\left(\alpha\,\size{1}^{(0)}+L_{1}, \ldots,
\alpha\,\size{N}^{(0)}+L_{N},\alpha\left(1 - \sum_{m=1}^N \mu_m^{(0)}\right)\right).
\end{align}
We can then discard $1 - \sum_{m=1}^N \size{m}$.

\section{Proof of the Microclustering Property for a Variant of the NBNB Model}
\label{sec:appendix_micro_proof}

\begin{theorem}
\label{theorem:NBNB-micro} If $C_N$ is drawn from a KP model with $\clusters = \textrm{NegBin}\left(a,
q\right)$ and $\sizes = \textrm{NegBin}\left(r, p\right)$,\footnote{We
  have not truncated the negative binomial distributions, so this is a
  minor variant the NBNB model.}  then for all $\epsilon > 0$, $P(M_N
\,/\, N \geq \epsilon) \rightarrow 0$ as $N \rightarrow \infty$, where
$M_N$ is the size of the largest cluster in $C_N$.
\end{theorem}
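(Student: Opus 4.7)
I would first reduce the claim to a conditional concentration argument plus an asymptotic estimate for $P(N = n)$. Since a sum of $K$ i.i.d.\ $\textrm{NegBin}(r, p)$ variables is again $\textrm{NegBin}(Kr, p)$,
\begin{equation*}
P(N = n \mid K = k) = \binom{n + kr - 1}{n}(1 - p)^{kr} p^n.
\end{equation*}
Moreover, because $\prod_i P(N_i = n_i) \propto \prod_i \binom{n_i + r - 1}{n_i}$ on the simplex $\{\sum n_i = n\}$, the unordered cluster sizes $(N_1, \dots, N_K)$ given $K = k$ and $N = n$ are distributed as a symmetric Dirichlet--multinomial $\textrm{DirMult}(n; r, \dots, r)$ with $k$ cells. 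Each marginal $N_i \mid K, N$ is therefore $\textrm{BetaBinomial}(n; r, (k-1)r)$, with mean $n/k$ and $\mathbb{E}[N_i^2 \mid K, N] = O(n/k + n^2/k^2)$.

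Next I would bound $\mathbb{E}[M_N \mid K, N]$ via the deterministic inequality $M_N \le t + \sum_i N_i \mathbf{1}(N_i > t)$ for any threshold $t > 0$, together with exchangeability and the Markov estimate $N_i \mathbf{1}(N_i > t) \le N_i^2/t$. This gives
\begin{equation*}
\mathbb{E}[M_N \mid K = k, N = n] \le t + \frac{k\,\mathbb{E}[N_1^2 \mid K, N]}{t} = t + O\!\left(\tfrac{n}{t} + \tfrac{n^2}{kt}\right),
\end{equation*}
and optimizing in $t$ yields $\mathbb{E}[M_N \mid K = k, N = n] = O(\sqrt{n} + n/\sqrt{k})$. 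By Markov's inequality, $P(M_N \ge \epsilon n \mid K = k, N = n) = O\bigl((\epsilon\sqrt{n})^{-1} + (\epsilon\sqrt{k})^{-1}\bigr)$, which tends to $0$ as $k, n \to \infty$.

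The main obstacle is to show that $K$ is large with high probability given $N = n$, namely $P(K \le k_0 \mid N = n) \to 0$ for every fixed $k_0$. My plan is to work with the probability generating function $\mathbb{E}[z^N] = g(f(z))$, where $g(z) = \bigl((1-q)/(1-qz)\bigr)^a$ is the pgf of $K$ and $f(z) = \bigl((1-p)/(1-pz)\bigr)^r$ is the pgf of a single $N_i$. The dominant singularity of $g \circ f$ sits at $z^\ast = \bigl(1 - (1-p) q^{1/r}\bigr) / p$, which is strictly greater than $1$ because $q^{1/r} < 1$; standard singularity analysis then yields $P(N = n) \asymp n^\beta (z^\ast)^{-n}$ for some exponent $\beta$. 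For fixed $k$, by contrast, the explicit pmf above gives $P(K = k, N = n) = \Theta(n^{kr - 1} p^n)$, so the ratio $P(K = k \mid N = n)$ is $O\bigl(n^C (p z^\ast)^n\bigr)$. Since $p z^\ast = 1 - (1-p) q^{1/r} < 1$, this decays exponentially in $n$, hence $P(K \le k_0 \mid N = n) \to 0$.

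To conclude, given $\epsilon, \delta > 0$, I pick $k_0$ large enough that $(\epsilon \sqrt{k_0})^{-1} < \delta/4$, then take $n$ large enough that $P(K \le k_0 \mid N = n) < \delta/2$ and $(\epsilon \sqrt{n})^{-1} < \delta/4$, and split $P(M_N \ge \epsilon n \mid N = n)$ along $\{K \le k_0\}$ versus $\{K > k_0\}$. If the singularity analysis in step three turns out to be fragile, a robust substitute is to lower-bound $P(N = n) \ge P(K = k^\ast, N = n)$ at a saddle-point value $k^\ast = \Theta(n)$ via a local central limit theorem for the $\textrm{NegBin}(k^\ast r, p)$ distribution, which again forces $P(K = k \mid N = n)$ to vanish for each fixed $k$.
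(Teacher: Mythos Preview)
Your argument is correct and matches the paper's overall architecture---show $P(K\le k_0\mid N=n)\to 0$ for each fixed $k_0$, bound $P(M_N\ge\epsilon n\mid K=k,N=n)$ for large $k$, and combine by splitting on $\{K\le k_0\}$ versus $\{K>k_0\}$---but the two sub-arguments differ. For the conditional bound, the paper also writes $N_1\mid K,N$ as a beta--binomial, but then splits on $\{\theta<(b\log k)/k\}$ and uses separate beta-tail and binomial-MGF estimates to produce a bound $c_k\to 0$ that is \emph{uniform in $n$}; your second-moment route through $M_N\le t+\sum_i N_i^2/t$ is more elementary and suffices, since the resulting $n$-dependence only helps. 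For $P(K\le k_0\mid N=n)\to 0$, the paper is much lighter than your singularity analysis: it simply observes that for each fixed $k$,
\[
\frac{P(K=k,\,N=n)}{P(K=k+1,\,N=n)}
=\frac{(k+1)\,\Gamma(kr+r)}{(k+a)\,q\,(1-p)^{r}\,\Gamma(kr)}\cdot\frac{\Gamma(n+kr)}{\Gamma(n+kr+r)}
\asymp\text{const}(k)\cdot(n+kr)^{-r}\to 0,
\]
which immediately gives $P(K=k\mid N=n)\le P(K=k,N=n)/P(K=k+1,N=n)\to 0$. Your pgf/singularity approach is correct and yields sharper asymptotics for $P(N=n)$, but is heavier than needed. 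One caveat on your LCLT fallback: at the saddle point $k^*=c^*n$ maximizing $P(K=k,N=n)$, the value $n$ is \emph{not} near the mean of $\textrm{NegBin}(k^*r,p)$ (equality would force $q=1$), so a plain local CLT does not apply; what actually works there is a direct Stirling estimate of $P(K=k^*,N=n)$, which still recovers the exponential rate $(z^*)^{-n}$ up to polynomial factors.
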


In this appendix, we provide a proof of
theorem~\ref{theorem:NBNB-micro}.

We use the following fact: $\Gamma\left(x+a\right) /\,
\Gamma\left(x\right) \asymp x^a$ as $x\rightarrow\infty$ for any
$a\in\R$ via Stirling's approximation to the gamma function. We use
$f(x) \asymp g(x)$ to denote that $f(x)\,/\,g(x)\rightarrow 1$ as
$x\rightarrow\infty$.\looseness=-1

\begin{lemma}
\label{lemma:pks-diverges}
For any $k\in\{1,2,\ldots\}$, $P(K=k \g N=n)\rightarrow 0$ as $n\rightarrow\infty$.
\end{lemma}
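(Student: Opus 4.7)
My plan is to apply Bayes' rule and avoid computing the marginal $P(N=n)$ entirely, by comparing $P(K=k \g N=n)$ to $P(K=k+1 \g N=n)$. Writing
\[
\frac{P(K=k \g N=n)}{P(K=k+1 \g N=n)} = \frac{P(N=n \g K=k)}{P(N=n \g K=k+1)} \cdot \frac{P(K=k)}{P(K=k+1)},
\]
the prior ratio on the right is a positive constant in $n$, so it suffices to show that the likelihood ratio tends to zero as $n\to\infty$.

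To evaluate the likelihood, I would use the standard fact that a sum of $k$ independent $\textrm{NegBin}(r,p)$ variables is itself $\textrm{NegBin}(kr,p)$. This gives
\[
P(N=n \g K=k) = \frac{\Gamma(n+kr)}{\Gamma(kr)\,n!}\,p^n\,(1-p)^{kr},
\]
and the asymptotic $\Gamma(x+a)/\Gamma(x)\asymp x^a$ recorded just before the lemma (applied with $x=n+1$ and $a=kr-1$) yields $P(N=n \g K=k)\asymp C_k\, n^{kr-1}p^n$ with $C_k=(1-p)^{kr}/\Gamma(kr)$.

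Consequently the likelihood ratio behaves like $(C_k/C_{k+1})\,n^{-r}\to 0$, since $r>0$. Combined with the constant prior ratio, this gives $P(K=k \g N=n)/P(K=k+1\g N=n)\to 0$, and since $P(K=k+1\g N=n)\leq 1$, we conclude $P(K=k\g N=n)\to 0$, which is exactly the lemma.

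The main obstacle is recognising that a direct attack on $P(N=n)=\sum_{k'} P(N=n\g K=k')P(K=k')$---a mixture over $k'$ of negative binomial masses whose shape parameters all scale with $k'$---can be sidestepped entirely. The comparison with a single neighbouring value $K=k+1$ suffices, because adding one more component to the sum multiplies the mass at a fixed large $n$ by a factor of order $n^r$, which dominates any fixed prior ratio. Once this is spotted, the remaining work is just the convolution identity for negative binomials and the gamma-function asymptotic already recorded in the appendix.
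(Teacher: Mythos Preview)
Your argument is correct and is essentially the paper's own proof: the paper likewise computes the ratio $P(K=k,N=n)/P(K=k+1,N=n)$, uses $N\mid K=k\sim\textrm{NegBin}(kr,p)$, applies the gamma asymptotic to show this ratio tends to zero, and then bounds $P(K=k\mid N=n)$ above by that ratio. The only cosmetic difference is that the paper works with the joint probabilities rather than separating the prior and likelihood ratios explicitly.
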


\begin{proof} Because $N \g K = k \sim \textrm{NegBin}\left(k r,
  p\right)$,
  \begin{equation*}
P(K=k, N=n) =
  \frac{\Gamma\left(k+a\right)}{k!\,\Gamma\left(a\right)}\, (1-q)^a\,
  q^k\, \frac{\Gamma\left(n + k r\right)}{n!\,\Gamma\left(k
    r\right)}\, (1-p)^{k r}\, p^n.
  \end{equation*}
Via the fact noted above, $\Gamma\left(n + k r\right) /\,\Gamma\left(n
+ k r + r\right) \asymp 1\,/\,(n + k r)^r \rightarrow 0$ as
$n\rightarrow\infty$, so
\begin{equation*}
\frac{P(K=k,N=n)}{P(K=k+1,N=n)} 
=
\frac{\Gamma\left(k+a\right)(k+1)}{\Gamma\left(k+a+1\right)q}\,
\frac{\Gamma\left(kr+r\right)}{\Gamma\left(kr\right)}\,
\frac{\Gamma\left(n+kr\right)}{\Gamma\left(n+kr+r\right)}
\rightarrow 0 \textrm{ as }n\rightarrow\infty.
\end{equation*}
Therefore,
\begin{equation*}
P(K=k\g N=n) = \frac{P(K=k,N=n)}{\sum_{k'=0}^\infty P(K=k',N=n)} \leq
\frac{P(K=k,N=n)}{P(K=k+1,N=n)} \rightarrow 0.
\end{equation*}
\end{proof}

\begin{lemma}
\label{lemma:bounding-sequence}
For any $\epsilon \in (0,1)$, there exist $c_1,c_2, \ldots \geq 0$,
not depending on $n$, such that $c_k \rightarrow 0$ as $k \rightarrow
\infty$ and $ k\, P(N_1\,/\,n \geq \epsilon \g K=k,N=n) \leq c_k $ for
all $n \geq 2\,/\,\epsilon$ and $k\in\{1,2,\ldots\}$.
\end{lemma}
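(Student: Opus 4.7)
The plan is to identify the conditional law of $N_1$ given $K=k$ and $N=n$ in closed form and then bound its upper tail via a second factorial moment. Because $N_1 \sim \textrm{NegBin}(r,p)$ and $N-N_1 = \sum_{i=2}^k N_i \sim \textrm{NegBin}((k-1)r,p)$ are independent with a common $p$, all factors involving $p$ cancel in the conditional PMF, and one finds that $N_1 \g K=k, N=n$ is Beta-Binomial with shape parameters $r$ and $(k-1)r$. The standard factorial moment formula for this distribution then gives
\begin{equation*}
\mathbb{E}\bigl[N_1(N_1-1) \g K=k,\, N=n\bigr] \;=\; \frac{n(n-1)(r+1)}{k(kr+1)} \;\leq\; \frac{n^2(r+1)}{k^2 r}.
\end{equation*}

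To convert this into a tail bound, I would observe that under the hypothesis $n \geq 2/\epsilon$ the event $\{N_1 \geq \epsilon n\}$ forces $N_1 \geq 2$ and $N_1 - 1 \geq \epsilon n - 1 \geq \epsilon n / 2$, so $N_1(N_1-1) \geq (\epsilon n)^2/2$ on this event. Applying Markov's inequality to the nonnegative variable $N_1(N_1-1)$ then yields
\begin{equation*}
P(N_1 \geq \epsilon n \g K=k,\, N=n) \;\leq\; \frac{2\, \mathbb{E}[N_1(N_1-1) \g K=k, N=n]}{(\epsilon n)^2} \;\leq\; \frac{2(r+1)}{\epsilon^2 k^2 r}.
\end{equation*}
Multiplying both sides by $k$ and setting $c_k := 2(r+1)/(\epsilon^2 r k)$ produces a sequence that does not depend on $n$ and satisfies $c_k \to 0$ as $k \to \infty$, as required.

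The step I expect to need the most care is choosing the right moment to control. A Markov bound on $N_1$ itself gives only $k\,P \leq 1/\epsilon$, and a Chebyshev bound using $\textrm{Var}(N_1)$ leaves a non-vanishing contribution of order $1/\epsilon$ from the borderline regime $n \approx 2/\epsilon$. The second \emph{factorial} moment, however, is of order $n^2/k^2$---reflecting that $N_1/n$ is asymptotically distributed as a $\textrm{Beta}(r,(k-1)r)$ random variable, whose second moment is $O(1/k^2)$. This quadratic-in-$1/k$ decay is exactly what makes the $n^2$ cancel in the Markov ratio and leaves a bound depending only on $k$, $r$, and $\epsilon$.
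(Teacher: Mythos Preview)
Your proof is correct, and it is considerably simpler than the argument in the paper. Both you and the paper begin by identifying the conditional law $N_1\mid K=k,\,N=n$ as $\textrm{BetaBin}(n,r,(k-1)r)$. From there, however, the routes diverge. The paper writes $N_1$ as $X\mid\theta\sim\textrm{Bin}(n,\theta)$ with $\theta\sim\textrm{Beta}(r,(k-1)r)$, fixes a threshold $a_k=(b\log k)/k$, and splits $P(X\ge n\epsilon)$ according to $\{\theta<a_k\}$ versus $\{\theta\ge a_k\}$; each piece is then controlled by a separate supporting lemma (a Beta tail estimate and a Chernoff-type binomial bound), producing a $c_k$ of the form $k^{-1}(1+b\log k)^{2/\epsilon}+k\,P(\theta_k\ge a_k)$. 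Your argument bypasses this machinery entirely: the exact second factorial moment $\mathbb{E}[N_1(N_1-1)]=n(n-1)(r+1)/\bigl(k(kr+1)\bigr)$ already carries the $1/k^2$ decay, so a single Markov inequality (with the elementary observation that $N_1(N_1-1)\ge(\epsilon n)^2/2$ on the event of interest once $n\ge 2/\epsilon$) yields an explicit bound $c_k=2(r+1)/(\epsilon^2 r k)$. What you gain is a shorter, self-contained proof with an explicit constant and no auxiliary lemmas; what the paper's decomposition offers is a template that could in principle be adapted to cluster-size distributions for which closed-form moments are unavailable.
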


Before proving lemma~\ref{lemma:bounding-sequence}, we first show how
theorem~\ref{theorem:NBNB-micro} follows from it.

\begin{proof}[Proof of theorem~\ref{theorem:NBNB-micro}]
Let $\epsilon \in (0,1)$ and choose $c_1,c_2,\ldots$ by
lemma~\ref{lemma:bounding-sequence}. For any $n \geq 2\,/\,\epsilon$,
\begin{align*}
&P(M_n \,/\, n \geq \epsilon \g N=n) \\
&\quad=\sum_{k=1}^\infty P(N_1\,/\,n\geq\epsilon \text{ or } \cdots \text{ or } N_K\,/\,n\geq\epsilon \g K=k,N=n)\,P(K=k\g N=n) \\
&\quad\leq\sum_{k=1}^\infty\sum_{i=1}^k P(N_i\,/\,n\geq\epsilon \g K=k,N=n)\,P(K=k\g N=n) \\
&\quad=\sum_{k=1}^\infty k\, P(N_1\,/\,n\geq\epsilon \g
  K=k,N=n)\,P(K=k\g N=n) \\
&\quad\leq\sum_{k=1}^\infty c_k\, P(K=k \g N=n)
  \leq \textrm{sup}\,\{c_k : k>m\} + \sum_{k=1}^m c_k\, P(K=k\g N=n)
\end{align*}
for any $m\geq 1$. (We note that we only summed over $k\geq 1$ because
$P(K=0 \g N=n) = 0$ for any $n\geq 1$.)  Therefore, via lemma
\ref{lemma:pks-diverges}, $\textrm{limsup}_n P(M_n \,/\, n \geq
\epsilon \g N=n) \leq \textrm{sup}\,\{c_k : k>m\}$. Finally, because
$\textrm{sup}\,\{c_k : k>m\}\rightarrow 0$ as $m\rightarrow\infty$,
theorem~\ref{theorem:NBNB-micro} follows directly from
lemma~\ref{lemma:bounding-sequence}, as desired.\looseness=-1
\end{proof}

To prove lemma~\ref{lemma:bounding-sequence}, we need two supporting
results.

\begin{lemma}
\label{lemma:beta-limit}
If $b>(r+1)\,/\,r$ and
$\theta_k\sim\textrm{Beta}\left(r,(k-1)\,r\right)$, then $k\,
P(\theta_k \geq \frac{b \log{(k)}}{k}) \rightarrow 0$ as
$k\rightarrow\infty$.
\end{lemma}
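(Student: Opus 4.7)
The goal is to bound the upper tail of $\theta_k\sim\Beta(r,(k-1)r)$ at the level $t_k \teq b\log(k)/k$ and show it decays faster than $1/k$ when $b > (r+1)/r$. I would estimate the incomplete beta integral by a one-line sup bound and then plug in sharp asymptotics for the remaining two pieces (the Beta normalizing constant and the $(1-t_k)$-factor).

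\textbf{Step 1: reduce to an elementary bound.} Writing the tail directly,
\begin{equation*}
P\!\left(\theta_k \geq t_k\right)
= \frac{1}{B(r,(k-1)r)}\int_{t_k}^{1} x^{r-1}(1-x)^{(k-1)r-1}\,dx.
\end{equation*}
Since $(k-1)r - 1 \geq 0$ for all large $k$, the factor $(1-x)^{(k-1)r-1}$ is decreasing, so it is $\leq (1-t_k)^{(k-1)r-1}$ on $[t_k,1]$. The remaining integral $\int_{t_k}^1 x^{r-1}\,dx \leq 1/r$. Hence
\begin{equation*}
P\!\left(\theta_k \geq t_k\right) \leq \frac{(1-t_k)^{(k-1)r-1}}{r\,B(r,(k-1)r)}.
\end{equation*}

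\textbf{Step 2: asymptotics of the two factors.} For the Beta constant, $B(r,(k-1)r) = \Gamma(r)\Gamma((k-1)r)/\Gamma(kr)$, and the fact $\Gamma(x+a)/\Gamma(x)\asymp x^a$ noted in the appendix (applied with $x=(k-1)r$ and $a=r$) gives $1/B(r,(k-1)r)\asymp ((k-1)r)^r/\Gamma(r) \asymp (kr)^r/\Gamma(r)$. For the other factor, using $\log(1-t_k) = -t_k + O(t_k^2)$ and $t_k\to 0$,
\begin{equation*}
(1-t_k)^{(k-1)r-1} = \exp\!\bigl(((k-1)r-1)\log(1-t_k)\bigr) \asymp \exp(-kr\,t_k) = k^{-br}.
\end{equation*}

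\textbf{Step 3: combine and conclude.} Multiplying the two asymptotics gives a constant multiple of $k^{r}\cdot k^{-br} = k^{-r(b-1)}$, so
\begin{equation*}
k\,P\!\left(\theta_k \geq t_k\right) \;\leq\; C\, k^{\,1 - r(b-1)}
\end{equation*}
for all sufficiently large $k$ and some constant $C$ depending only on $r$. The assumption $b > (r+1)/r$ is exactly the statement $r(b-1) > 1$, so the exponent $1 - r(b-1)$ is strictly negative and the right-hand side tends to $0$.

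\textbf{Anticipated obstacle.} None of the three steps is deep; the only thing to be careful about is keeping the asymptotic constants clean and making sure the Stirling-type estimate $\Gamma(x+a)/\Gamma(x)\asymp x^a$ is applied in the correct regime (i.e.\ with $x\to\infty$ as $k\to\infty$, which it does). A secondary small point is that the bound $(1-x)^{(k-1)r-1}\leq (1-t_k)^{(k-1)r-1}$ requires $(k-1)r-1\geq 0$; this holds for $k\geq 1+1/r$, and the conclusion of the lemma only concerns large $k$, so this is harmless.
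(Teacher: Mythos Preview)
Your argument is correct and follows essentially the same strategy as the paper: bound the incomplete beta integral by sup-bounding one factor of the integrand and integrating the other, then feed in the Stirling asymptotic for $B(r,(k-1)r)$ and the estimate $(1-t_k)^{kr}\approx k^{-br}$. The one cosmetic difference is that you sup-bound $(1-x)^{(k-1)r-1}$ by its value at $x=t_k$ and integrate $x^{r-1}$ (giving $\le 1/r$), whereas the paper does the reverse---it sup-bounds $\theta^{r-1}$ by $1/t_k$ (valid for all $r>0$) and integrates $(1-\theta)^{(k-1)r-1}$ exactly, picking up an extra factor $\sim k/\log k$ from the sup and an extra $1/k$ from the integral. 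Both routes land on the same critical exponent $k^{\,r+1-br}$, so the condition $b>(r+1)/r$ drops out identically; your version is arguably a touch cleaner since it avoids the spurious $\log k$ factor.
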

\begin{proof}
Let $a_k = (b\log{(k)})\,/\,k$, and suppose that $k$ is large enough
that $a_k\in(0,1)$. First, for any $\theta\in(a_k,1)$, we have
$\theta^{r-1}\leq 1\,/\,a_k$.  Second, $B\left(r,(k-1)\,r\right) =
\Gamma\left(r\right)\Gamma\left(k r - r\right)
/\,\Gamma\left(kr\right) \asymp \Gamma\left(r\right)(k r)^{-r}$ as
$k\rightarrow\infty$, via Stirling's approximation, as we noted
previously. Third, because $1+x\leq \exp{(x)}$ for any $x\in\R$,
$(1-a_k)^{k r} \leq \exp{(-a_k)}^{k r} = k^{- r b}$.  Therefore, we
obtain
\begin{align*}
&k\, P(\theta_k\geq a_k)\\
  &\quad= \frac{k}{B\left(r,(k-1)\,r\right)}
\int_{a_k}^1 \theta^{r-1}\,(1-\theta)^{(k-1)\,r-1}\ \textrm{d}\theta
\\
&\quad\leq \frac{k\,/\,a_k}{B\left(r,(k-1)\,r\right)} \int_{a_k}^1
(1-\theta)^{(k-1)r-1}\ \textrm{d}\theta = \frac{k\,/\,a_k}{B\left(r,(k-1)\,r\right)}\,\frac{(1-a_k)^{(k-1)\,r}}{(k-1)\,r}
\\
&\quad\leq \frac{k\,/\,a_k}{B\left(r,(k-1)\,r\right)}\, \frac{k^{-r b}\,(1-a_k)^{-r}}{(k-1)\,r}
\asymp \frac{k^2 \,/\, (b \log{(k)})}{\Gamma\left(r\right)(k r)^{-r}}\,
\frac{k^{-r b}}{k r} =
\frac{r^{r-1}\,k^{-br+r+1}}{
\Gamma\left(r\right)(b\log{(k)})
}
 \rightarrow 0
\end{align*}
as $k\rightarrow 0$ because $b > (r+1)\,/\,r$.
\end{proof}

\begin{lemma}
\label{lemma:binomial-bound}
Let $b>0$ and $\epsilon\in(0,1)$, as well as $k>1$ and
$n\in\{1,2,\ldots\}$. If $(b\log{(k)})\,/\,k < 1$ and
$X\sim\textrm{Bin}\left(n,(b\log{(k)}\right) /\,k)$, then $P(X \geq
n\epsilon) \leq (1 + b\log{(k)})^n \,/\, k^{n\epsilon}$.
\end{lemma}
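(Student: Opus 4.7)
The plan is to prove the bound via a Chernoff-style Markov inequality argument, using the base $k$ (which is greater than $1$ by hypothesis) as the exponential base so that the target bound $k^{n\epsilon}$ appears naturally in the denominator.

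First I would set $p = (b \log k)/k \in (0,1)$ and observe that because $k>1$, the function $x \mapsto k^x$ is strictly increasing, so the event $\{X \geq n\epsilon\}$ coincides with $\{k^X \geq k^{n\epsilon}\}$. Applying Markov's inequality to the nonnegative random variable $k^X$ then gives
\begin{equation*}
P(X \geq n\epsilon) \;=\; P(k^X \geq k^{n\epsilon}) \;\leq\; \frac{\mathbb{E}[k^X]}{k^{n\epsilon}}.
\end{equation*}

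Next I would compute the moment generating function of the binomial. Since $X \sim \textrm{Bin}(n,p)$, we have $\mathbb{E}[k^X] = (1-p+pk)^n = (1 + p(k-1))^n$. The elementary bound $p(k-1) \leq pk = b\log k$ then yields $\mathbb{E}[k^X] \leq (1 + b\log k)^n$. Plugging this into the Markov estimate gives exactly
\begin{equation*}
P(X \geq n\epsilon) \;\leq\; \frac{(1 + b\log k)^n}{k^{n\epsilon}},
\end{equation*}
as required.

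There is no real obstacle here: the lemma is a textbook Chernoff bound, and the only mild subtlety is choosing the right exponential base so that the factor $k^{n\epsilon}$ (rather than $e^{tn\epsilon}$ for some optimized $t$) appears, which makes the bound immediately usable in the subsequent proof of Lemma~\ref{lemma:bounding-sequence}. No hypothesis beyond $k>1$, $p \in (0,1)$, and nonnegativity of $k^X$ is needed.
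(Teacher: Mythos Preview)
Your proof is correct and essentially identical to the paper's: both apply Markov's inequality to $k^X$ (using that $k>1$ makes $x\mapsto k^x$ increasing), compute $\mathbb{E}[k^X]=(1+p(k-1))^n$, and bound $p(k-1)\leq pk=b\log k$. The only cosmetic difference is that the paper writes $X$ as a sum of i.i.d.\ Bernoullis and factors $\mathbb{E}[k^X]=\prod_i \mathbb{E}[k^{Z_i}]$ before applying the same bound.
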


\begin{proof}
Let $Z_1,\ldots,Z_n \iid \textrm{Bern}\left((b\log{(k)})\,/\,k\right)$.
Because $x \mapsto k^x$ is strictly increasing,
\begin{align*}
    P(X\geq n\epsilon) = P(k^X \geq k^{n\epsilon}) \leq \frac{\mathbb{E}[k^X]}{k^{n\epsilon}}
= \frac{\prod_{i=1}^n \mathbb{E}[k^{Z_i}]}{k^{n\epsilon}} \leq \frac{(1+b\log{(k)})^n}{k^{n\epsilon}}
\end{align*}
via Markov's inequality.
\end{proof}

\begin{proof}[Proof of lemma~\ref{lemma:bounding-sequence}]
First, let $\epsilon\in(0,1)$.  Next, let $b=(r+2)\,/\,r$ and choose
$k^*\in\{2,3,\ldots\}$ to be sufficiently large that
$(1+b\log{(k)})\,/\,k^{\epsilon} < 1$ and $(b\log{(k)})\,/\,k <
\epsilon$ for all $k \geq k^*$.  Then, for $k=1,2,\ldots,{k^*-1}$,
define $c_k=k$, and, finally, for $k=k^*,k^*+1,\ldots$, define
\begin{equation*}
c_k = k^{-1} (1 + b\log{(k)})^{2 / \epsilon} + k\, P\left(\theta_k \geq
\frac{b\log{(k)}}{k}\right),
\end{equation*}
where $\theta_k\sim\textrm{Beta}\left(r,(k-1)\,r\right)$.

Via lemma~\ref{lemma:beta-limit}, $c_k \rightarrow 0$ as
$k\rightarrow\infty$. Trivially, for $k < k^*$, $k\,
P(N_1\,/\,n\geq\epsilon\g K=k,N=n) \leq k = c_k$.

Let $k \geq k^*$ and suppose that $n\geq 2\,/\,\epsilon$.  Via a
straightforward calculation, we can show that $N_1 \g K=k,N=n\sim
\textrm{BetaBin}\left(n,r,(k-1)\,r\right)$. (This follows from the
fact that if $Y\sim\textrm{NegBin}\left(r,p\right)$ and,
independently, $Z\sim\textrm{NegBin}\left(r',p\right)$, then $Y\g
(Y+Z)=n \sim \textrm{BetaBin}\left(n,r,r'\right)$.)  Therefore, if we
define $\theta \sim \textrm{Beta}\left(r,(k-1)\,r\right)$, $X \g
\theta\sim\textrm{Bin}\left(n,\theta\right)$, and $a =
(b\log{(k)})\,/\,k$, then we have
\begin{align*}
k\, P(N_1\,/\,n\geq\epsilon\g K=k,N=n) &= k\, P(X\geq n\epsilon)\\
&= k\, P(X\geq n\epsilon,\theta < a) + k\, P(X\geq n\epsilon,\theta\geq a).
\end{align*}
However, $k\, P(X\geq n\epsilon,\theta\geq a) \leq k\, P(\theta\geq a)
= k\, P\left(\theta_k\geq \frac{b\log{(k)}}{k}\right)$. To handle the first term,
we note that as a function of $\theta$, $P(X=x\g \theta)$ is
nondecreasing on $(0,\epsilon)$ whenever $x\,/\,n\geq\epsilon$ because
$\frac{\textrm{d}P(X=x\g
  \theta)}{\textrm{d}\theta}={{n}\choose{x}}\,\theta^{x-1}\,(1-\theta)^{n-x-1}\,(x-n\theta)$.
Therefore, $P(X\geq n\epsilon \g \theta) = \sum_{x \geq n\epsilon}
P(X=x\g\theta)$ is nondecreasing on $(0,\epsilon)$. Finally, because
our choice of $k^*$ means that $a\in(0,\epsilon)$,
\begin{align*}
 &k\, P(X\geq n\epsilon,\theta < a)\notag\\
&\quad= k \int_0^a P(X\geq n\epsilon \g \theta)\, P(\theta)\ \textrm{d}\theta
\leq k\, P(X\geq n\epsilon \g \theta = a) \\ 
&\quad\leq k\,(1 + b\log{(k)})^n \,/\, k^{n\epsilon} = k\left(\frac{1+b\log{(k)}}{k^\epsilon}\right)^n\\
&\quad\leq k\left(\frac{1+b\log{(k)}}{k^\epsilon}\right)^{2/\epsilon} = k^{-1}\left(1+b\log{(k)}\right)^{2/\epsilon},
\end{align*}
where the second inequality is via lemma~\ref{lemma:binomial-bound}
and the third inequality holds because $n\geq 2\,/\,\epsilon$ and
$(1+b\log{(k)})\,/\,k^\epsilon < 1$ because of our choice of $k^*$.
Thus, $k\, P(N_1\,/\,n\geq\epsilon\g K=k,N=n) \leq c_k$.\looseness=-1
\end{proof}

This completes the proof of theorem~\ref{theorem:NBNB-micro}.

\section{The Chaperones Algorithm}
\label{sec:appendix_b}

For large data sets with many small clusters, standard Gibbs sampling
algorithms (such as the one outlined in section 3) are too slow. In
this appendix, we therefore propose a new Gibbs-type sampling
algorithm, which we call the chaperones algorithm. This algorithm is
inspired by existing split--merge Markov chain sampling
algorithms~\cite{jain04split--merge,steorts??bayesian,steorts14smered};
however, it is simpler, more efficient, and---most
importantly---likely exhibits better mixing properties when there are
many small clusters.

In a standard Gibbs sampling algorithm, each iteration involves
reassigning each data point $x_n$ for $n = 1, \ldots, N$ to an
existing cluster or to a new cluster by drawing a sample from $P(C_N
\g N, C_N \!\setminus\! n, x_1, \ldots, x_N)$. When the number of clusters
is large, this step can be inefficient because the probability that
element $n$ will be reassigned to a given cluster will, for most
clusters, be extremely small.\looseness=-1

The chaperones algorithm focuses on reassignments that have higher
probabilities. If $c_n \in C_N$ denotes the cluster containing data
point $x_n$, then each iteration consists of the following steps:
\begin{enumerate}
\item Randomly choose two chaperones, $i, j \in \{1, \ldots, N\}$ from
  a distribution $P(i, j \g x_1, \ldots, x_N)$ where the probability
  of $i$ and $j$ given $x_1, \ldots, x_N$ is greater than zero for all
  $i \neq j$. This distribution must be independent of the current
  state of the Markov chain $C_N$; however, crucially, it may depend
  on the observed data points $x_1, \ldots, x_N$.

\item Reassign each $x_n \in c_i \cup c_j$ by sampling from $P(C_N \g N,
  C_N \!\setminus\! n, c_i \cup c_j, x_1, \ldots, x_N)$.
\end{enumerate}

In step 2, we condition on the current partition of all data points
except $x_n$, as in a standard Gibbs sampling algorithm, but we also
force the set of data points in $c_i \cup c_j$ to remain
unchanged---i.e., $x_n$ must remain in the same cluster as at least
one of the chaperones. (If $n$ is a chaperone, then this requirement
is always satisfied.) In other words, we view the non-chaperone data
points in $c_i \cup c_j$ as ``children'' who must remain with a
chaperone at all times. Step 2 is almost identical to the restricted
Gibbs moves found in existing split--merge algorithms, except that the
chaperones $i$ and $j$ can also change clusters, provided they do not
abandon any of their children. Splits and merges can therefore occur
during step 2: splits occur when one chaperone leaves to form its own
cluster; merges occur when one chaperone, belonging to a singleton
cluster, then joins the other chaperone's cluster.

The chaperones algorithm can be justified as follows: For any fixed
pair of chaperones $(i,j)$, step 2 is a sequence of Gibbs-type moves
and therefore has the correct stationary distribution. Randomly
choosing the chaperones in step 1 amounts to a random move, so, taken
together, steps 1 and 2 also have the correct stationary distribution
(see, e.g., \cite{tierney94markov}, sections 2.2 and 2.4). To
guarantee irreducibility, we start by assuming that $P(x_1, \ldots,
x_N \g C_N)\,P(C_N) > 0$ for any $C_N$ and by letting $C'_N$ denote
the partition of $N$ in which every element belongs to a singleton
cluster. Then, starting from any partition $C_N$, it is easy to check
that there is a positive probability of reaching $C'_N$ (and vice
versa) in finitely many iterations; this depends on the assumption
that $P(i, j \g x_1, \ldots, x_N) > 0$ for all $i \neq
j$. Aperiodicity is also easily verified since the probability of
staying in the same state is positive.

The main advantage of the chaperones algorithm is that it can exhibit
better mixing properties than existing sampling algorithms. If the
distribution $P(i, j \g x_1, \ldots, x_N)$ is designed so that $x_i$
and $x_j$ tend to be similar, then the algorithm will tend to consider
reassignments that have a relatively high probability. In addition,
the algorithm is easier to implement and more efficient than existing
split--merge algorithms because it uses Gibbs-type moves, rather than
Metropolis-within-Gibbs moves.\looseness=-1

\newpage
\section{The Syria2000 and SyriaSizes Data Sets}
\label{sec:syrianData}

\begin{figure}[htbp]
\begin{center}
\includegraphics[scale=0.3]{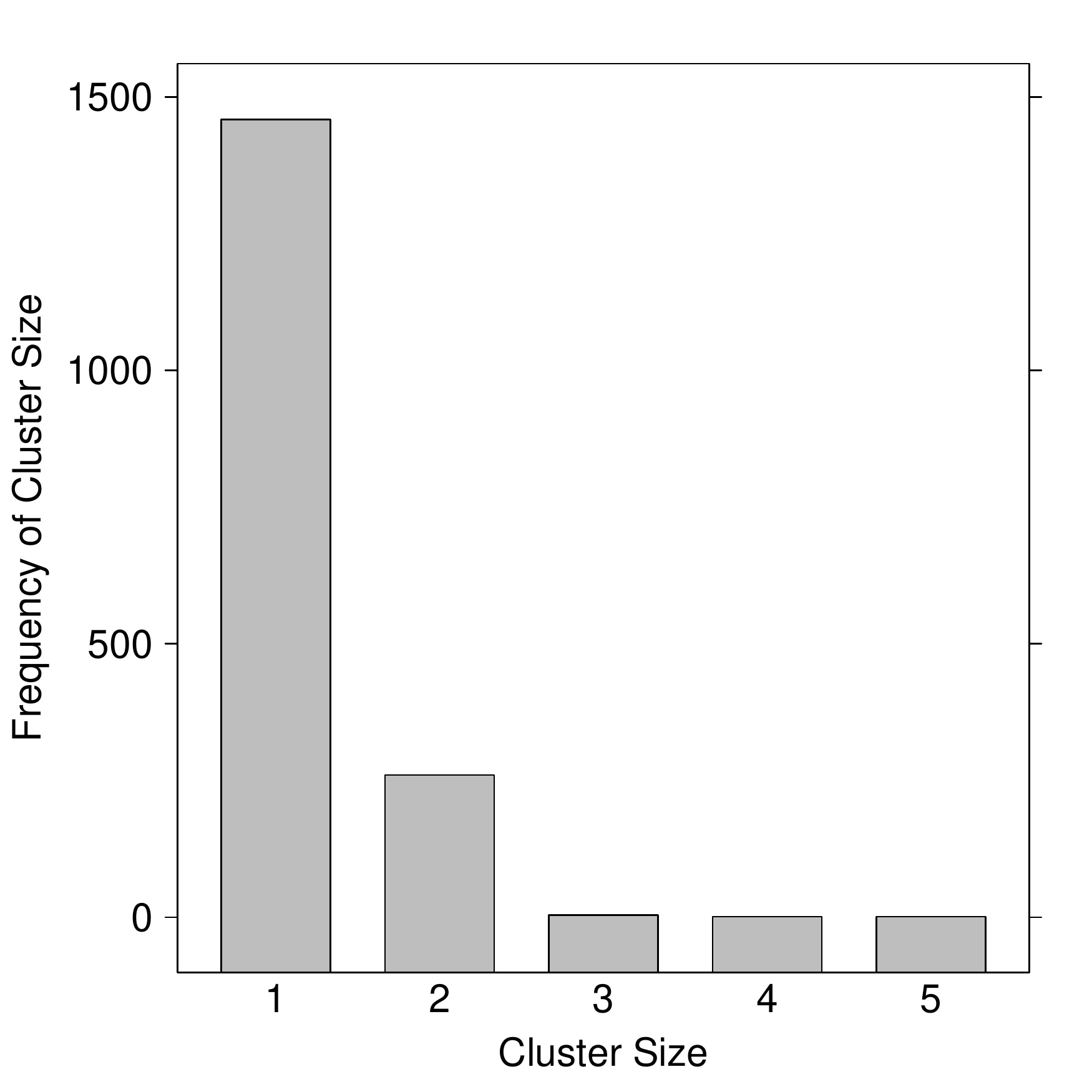}
\includegraphics[scale=0.3]{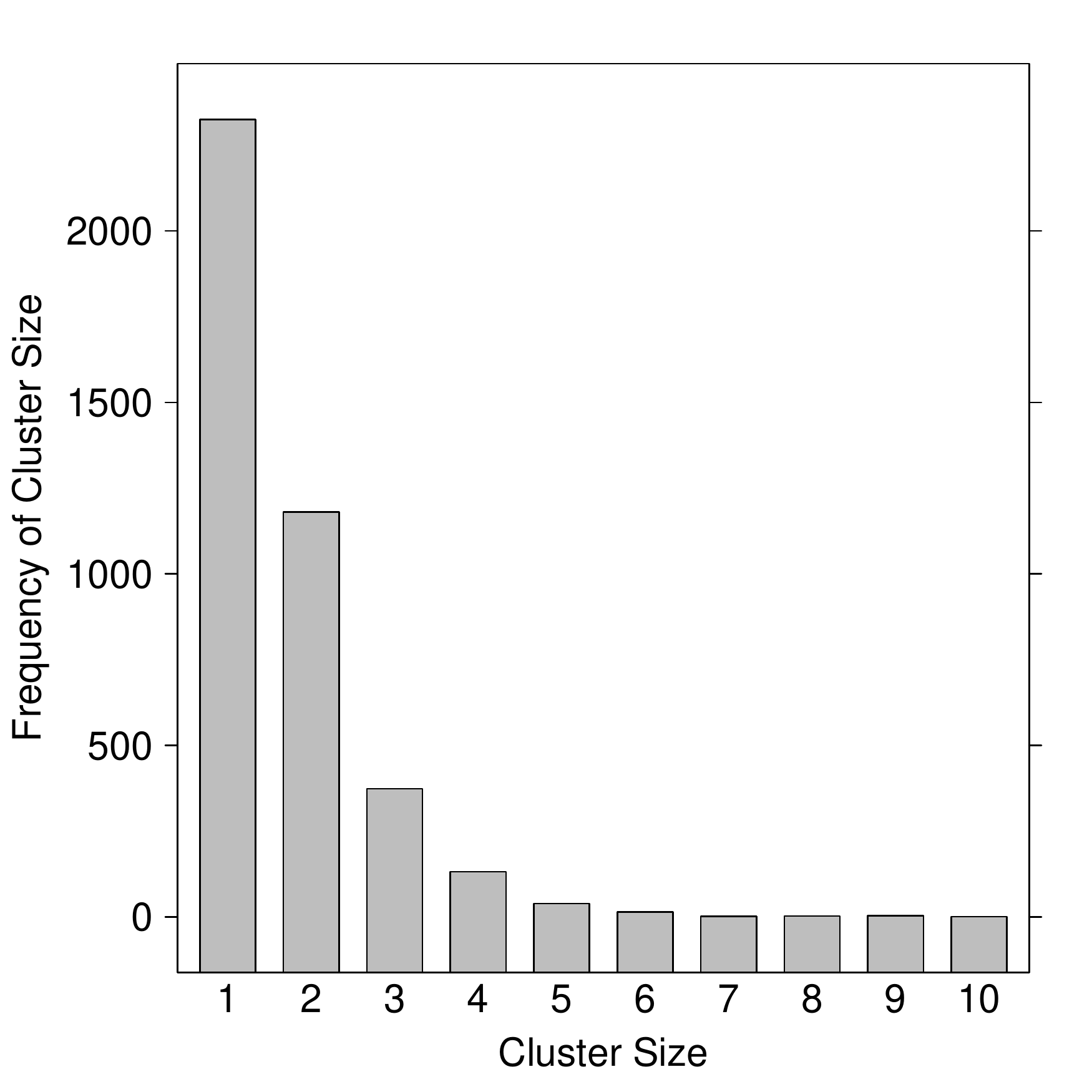}
\caption{Cluster size distributions for the Syria2000 (left)
  and SyriaSizes (right) data sets.}
\label{fig:SyriaBarCharts}
\end{center}
\end{figure}

\end{document}